\documentclass[runningheads]{llncs}
\usepackage[T1]{fontenc}
\usepackage{graphicx}
\usepackage{booktabs}
\usepackage[misc]{ifsym}
\usepackage{tikz}
\usepackage[noend,ruled,linesnumbered]{algorithm2e}
\usepackage{multirow}
\usepackage{subcaption}
\usepackage{graphicx}
\usepackage{pifont}
\usepackage{amsmath}
\usepackage{amssymb}
\usepackage{bbding}
\newcommand{\Sref}[1]{\S\ref{#1}}

\usepackage{mwe}

\newcommand\blfootnote[1]{
  \begingroup
  \renewcommand\thefootnote{}
  \hypersetup{hyperfootnotes=false}
  \footnote{#1}
  \endgroup
}

\usepackage{hyperref}
\usepackage{booktabs}
\usepackage{pifont}
\usepackage{makecell}
\begin{document}

\title{Protecting K-Nearest Neighbor Queries from Location Inference Attacks}

\author{Zhiyu Sun\inst{1*}  \and Jie Fu\inst{2*} \and Xinpeng Ling\inst{3} \and Huifa Li\inst{4} \and Zhili Chen\inst{1} (\Letter)}

\institute{East China Normal University, China \\
\email{zhlchen@sei.ecnu.edu.cn}
\and
Stevens Institute of Technology, USA 
\and
Tongji University, China 
\and
Mohamed bin Zayed University of Artificial Intelligence, UAE}

\maketitle
\blfootnote{$^*$ Both authors contributed equally to this work.} 

\begin{abstract}
The k-nearest neighbor query (kNNQ) is a core component of modern location-based services (LBS) and has been widely adopted in popular features such as ``people nearby''. However, its potential privacy risks have long been overlooked. In this work, we present the first two attacks against kNNQ, namely the geometric intersection location inference attack (GI-LIA) and the zero-order optimization location inference attack (ZO-LIA), revealing the inherent location privacy risks posed by kNNQ. To mitigate these privacy risks, we further propose DPRS, a differential privacy framework for kNNQ protection. The core idea of DPRS is to incorporate a rejection sampling mechanism within a constrained perturbation interval, thereby mitigating the distance distortion caused by excessive noise injection. In addition, we design a private interval construction algorithm to construct the perturbation interval, enabling the rejection sampling mechanism to achieve a more favorable trade-off between privacy protection and query utility in kNNQ. Extensive experiments on real-world spatial datasets demonstrate that DPRS outperforms existing methods in both privacy protection and query utility. Our code is available at \url{https://github.com/reanatom/DPRS}.

\keywords{ K-nearest neighbor query \and Location inference attack  \and Differential privacy }
\end{abstract}

\vspace{-2mm}
\section{Introduction}\label{sec:prelim}

Location-Based Services (LBS) have become a fundamental component of modern mobile applications~\cite{shi2021efficient,9101624,10.1007/s00778-019-00568-7}, but they also introduce significant privacy risks. Prior work has demonstrated that attackers can infer users' true locations from obfuscated location information. Existing location inference attacks (LIAs) typically exploit distance information~\cite{2014Tinder}, spatial regions~\cite{zhao2018exploiting}, or user trajectories~\cite{yadav2024protecting,wang2023location} to recover sensitive location data.

However, location privacy threats in \emph{k}-nearest neighbor query (kNNQ) services (e.g., ``nearby people'' features) have been largely overlooked. To the best of our knowledge, location inference attacks specifically targeting kNNQ services have not been systematically studied. In such systems, attackers may exploit the ranked neighbor list returned by the API to infer the precise location of a target user. Given the widespread use of kNNQ in social and recommendation applications (such as Tinder~\cite{2014Tinder}), this work systematically investigates location privacy risks in kNNQ services and proposes an effective defense mechanism to mitigate such threats.

\noindent\textbf{Attack design and evaluation.}
We propose two location inference attacks against kNNQ services: the \emph{Geometric Intersection-based LIA} (GI-LIA) and the \emph{Zero-order Optimization-based LIA} (ZO-LIA). 
GI-LIA leverages distance relationships between the attacker at multiple query locations and the target, and solves the target's coordinates using geometric intersection. 
In contrast, ZO-LIA estimates a pseudo-gradient from the target's ranking information in query results and iteratively approximates the target's true location.  
Experimental results demonstrate that both attacks are highly effective, achieving inference success rates above $95\%$. 
These findings indicate that current kNNQ services face severe privacy risks even when users' exact coordinates are not explicitly disclosed.

\noindent\textbf{Design of defense method.}
To mitigate location privacy risks in kNNQ, we propose \emph{DPRS}, a privacy-preserving perturbation framework. 
DPRS perturbs original location data within a carefully constructed privacy region, thereby reducing distance distortion among nearest neighbors. 
To achieve this goal, we design a rejection-sampling-based location perturbation mechanism that repeatedly proposes and tests candidate locations within a specified interval to ensure the resulting samples follow the target distribution. 
This mechanism also constrains perturbed samples within the interval, preventing excessive deviation. Applying rejection sampling to DP-kNNQ poses two key challenges: guaranteeing differential privacy under the altered noise distribution and selecting an appropriate perturbation interval to preserve query utility. To address them, we characterize the post-rejection noise using Rényi divergence and design a private interval construction algorithm that adaptively determines the perturbation interval.

\noindent\textbf{Evaluation.}
We conduct extensive experiments on two real-world datasets and two synthetic datasets to evaluate the performance of DPRS. 
The results show that DPRS effectively protects location privacy and significantly reduces privacy leakage. 
In particular, under a low privacy budget, DPRS reduces the success rates of both GI-LIA and ZO-LIA attacks to below $3\%$, demonstrating the robustness of the proposed defense. 
Moreover, under the same privacy budget, DPRS consistently outperforms three state-of-the-art location differential privacy methods~\cite{andres2013geo,hong2022collecting,wang2022srr} across multiple utility metrics. The main contributions of this paper are summarized as follows:

\begin{itemize}
\item We present the first location inference attacks against kNNQ services, exposing serious privacy risks in existing kNN-based location services.
\item We propose DPRS, a privacy-preserving framework for kNNQ that achieves a better privacy--utility trade-off through constrained perturbation.
\item We establish an RDP-based privacy guarantee for the rejection sampling mechanism and derive its utility bound under the default interval.
\item Experiments on two real-world and two synthetic datasets show that DPRS consistently outperforms existing methods in both privacy protection and query utility.
\end{itemize}

\vspace{-2mm}
\section{Background and Related Work} \label{background}

Existing location inference attacks (LIAs) deduce true locations from obfuscated data using probabilistic models and contextual information~\cite{andres2013geo,yadav2024protecting,wang2023location,DBLP:conf/woot/BitsikasSPR24}, but fundamentally rely on access to coordinate data or trajectory. This work explores a more stringent scenario: the LBS only returns an anonymous, ranked list of k-nearest neighbors (kNN)~\cite{seidl1998optimal}. This operation, known as a k-nearest neighbor query (kNNQ), is a fundamental block in numerous modern applications and is defined as follows:

\begin{definition}({\bf K-nearest neighbor query~\cite{seidl1998optimal}}) \label{definition:kNNS}
For a query object $q \in O$ and a query parameter $k$, the k-nearest neighbor query returns the smallest set $NN_{q}(k) \subseteq D$ that contains (at least) $k$ objects from the database, satisfying:
\begin{align}
NN_{q}(k) = \{ o \in D \mid d(o, q) \le d(o_k, q) \},
\end{align}
where $o_k$ is the k-th closest object to $q$ in the database $D$.
\end{definition}

Although kNNQ seemingly protects privacy by concealing coordinates, its ranked list forms a potent side-channel highly susceptible to location inference. Differential privacy (DP)~\cite{dwork2014algorithmic} is currently the gold standard for geolocation protection.

\begin{definition}
({\bf Differential Privacy~\cite{dwork2014algorithmic}}). The algorithm $A$ provides ($\epsilon$, $\delta$)-Differential Privacy, if for any two neighboring datasets $D$ and $D'$ that differ in only a single entry, $\forall S \subseteq \text{Range}(A)$, the following holds:
\begin{equation}
{\rm Pr}(A(D) \in S) \leq e^{\epsilon} \times {\rm Pr}(A(D') \in S) + \delta.
\end{equation}
\end{definition}

DP, foundationalized by geo-indistinguishability~\cite{andres2013geo}, is widely applied in LBS \cite{wang2021accurate,niu2020eclipse}. However, existing DP-kNNQ methods suffer from severe flaws: L-SRR~\cite{wang2022srr} can only perform kNNQ by perturbing users to a finite, predefined set of points of interest (POIs). This approach severely distorts the data's underlying spatial distribution, rendering the dataset unusable for other aggregation analyses. Meanwhile, other methods, such as ReuseKNN \cite{mullner2023reuseknn}, provide uneven protection; they only apply perturbation to a small subset of users deemed vulnerable, leaving the broader user population with weaker privacy guarantees. Schemes like FedKNN \cite{zhang2024fedknn}, on the other hand, only perturb each participant's contribution weight, rendering their security entirely dependent on hardware isolation and creating a catastrophic ``all-or-nothing'' privacy risk. To overcome these limitations, we propose the DPRS mechanism. In this paper, we use a more powerful privacy framework: Rényi Differential Privacy (RDP) \cite{IlyaMironov2017RnyiDP} for privacy loss analysis.
\begin{definition}({\bf Rényi Divergence \cite{van2014renyi}}) Given two probability distributions $P$ and $Q$, the Rényi divergence of order $\alpha > 1$ is: 
\begin{align}
D_{\alpha}(P \| Q)=\frac{1}{\alpha-1} \ln \mathbf{E}_{x \sim Q}\left[\left(\frac{P(x)}{Q(x)}\right)^{\alpha}\right],
\end{align}
where $P(x)$, $Q(x)$ denotes the probability density of $P$, $Q$.
\end{definition}

\vspace{-2mm}
\section{Problem Setup} \label{problem setup}

\begin{figure}[tbp] 
    \centering 
    \includegraphics[width=0.7\columnwidth]{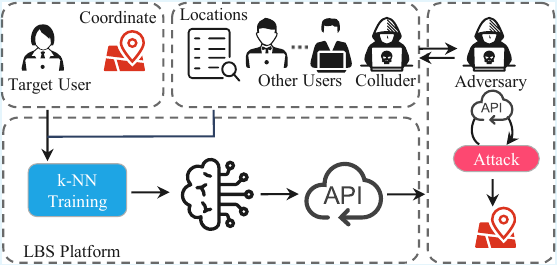} 
    \caption{Threat model}
    \label{fig:setup} 
\end{figure}

We consider a location-based service (LBS) platform that maintains a set of users 
$\mathcal{U} = \{u_1, u_2, ..., u_N\}$, where each user $u_i$ is associated with a 
2D geographic coordinate $L_i = (x_i, y_i)$. 
The LBS server supports the kNNQ interface that allows 
a user to submit a query location $L_q$ and obtain a ranked list of the $k$ nearest users 
based on their distances to $L_q$. Formally, given a query location $L_q$, the server returns an ordered list: 

\begin{align}
\text{kNNQ}(L_q) = [u_{(1)}, u_{(2)}, ..., u_{(k)}].
\end{align}

To protect user privacy, the server does not reveal the exact distance values or 
the precise locations of users. Instead, it only returns the ranked order of users.

\vspace{-1mm}
\subsection{Threat Model}
We consider the adversary who is a legitimate registered user of the LBS 
platform and has normal access to the kNNQ API. The adversary cannot compromise the 
server or access its internal data structures, and can only interact with the system 
through API queries. The adversary possesses the following capabilities:

\begin{itemize}

\item \textbf{API Access.}
The adversary can repeatedly issue kNN queries from arbitrary locations and observe the returned ranked list of anonymized user IDs.

\item \textbf{Collusion.}
The adversary can collaborate with one or more colluding users whose precise 
locations are known to the adversary. These colluders act as mobile probes that can 
move to different locations and report their exact coordinates.

\item \textbf{Adaptive Queries.}
The adversary can adaptively change the probe locations and repeatedly query the API 
to observe how the ranking of the target user changes with respect to other users.

\end{itemize}

Figure~\ref{fig:setup} illustrates the threat model. Although the server does not reveal distance values, the ranking output implicitly 
encodes relative distance information. 
In particular, when a probe moves to a location where the rank of the target user 
changes relative to another user, this event indicates that the probe lies on a 
boundary where the distances to the two users become equal.

\noindent \textbf{Practicality of the Attack.}
This attack scenario naturally arises in many real-world location-based services. 
To protect user privacy, many LBS platforms avoid revealing exact locations or 
distances and instead return a ranked list of nearby users. For example, social 
applications such as Tinder~\cite{2014Tinder} provide “nearby users” functionality that displays 
profiles ordered by proximity while hiding precise coordinates or distance values. 
Since such services are accessible through mobile interfaces or public APIs, an adversary can repeatedly issue queries from different locations—either by moving 
their own device or collaborating with colluding users, and observe how the target 
user’s rank changes. These observations enable the adversary to gradually infer 
the target user’s location.

\vspace{-1mm}
\subsection{Attack Formulation}

Let the true location of the target user be $L_t = (x_t, y_t)$. The adversary interacts with the LBS server by issuing a sequence of kNN queries from different probe locations. Let
$Q = \{q_1, q_2, ..., q_n\}$ denotes the set of query results obtained from the server, where each query result 
$q_i$ corresponds to the ranked list returned by the kNNQ interface at probe 
location $L_i$. Meanwhile, the adversary also maintains a set of colluding probe locations $C = \{c_1, c_2, ..., c_l\}$, where each $c_j$ represents the exact coordinate of a colluder used in the probing 
process.

Through adaptive probing, the adversary observes the positions where the ranking of 
the target user changes relative to other users. Based on the collected query results and probe locations, the adversary aims to design a location inference algorithm.

\begin{align}
\mathcal{A}: (Q, C) \rightarrow L_{inf},
\end{align}
which outputs an estimated location of the target user: $L_{inf} = (x_{inf}, y_{inf})$.

\vspace{-1mm}
\subsection{Measurement of the Attack}\label{subsec:Measurement of the Attack}
To quantitatively evaluate the effectiveness and efficiency of the proposed attack, we adopt three metrics.

\noindent\textbf{Error Distance (Dist).} 
We measure the average inference error between the estimated location and the true location of the target user. 
Given a set of $N$ attack instances, the average error distance is defined as below: 

\begin{align}
\text{Dist} = \frac{1}{N} \sum_{i=1}^{N} d\big(L_{inf}^{(i)}, L_t^{(i)}\big),
\end{align}where $d(\cdot)$ denotes the Euclidean distance, $L_{inf}^{(i)}$ is the inferred location, and $L_t^{(i)}$ is the true location of the $i$-th target user.

\noindent\textbf{Attack Success Rate (Acc).} We define a successful attack as one where the inferred location lies within a predefined distance threshold $\tau$ from the true location. 
The attack success rate is defined as follows:
\begin{align}
\text{Acc} = \frac{1}{N} \sum_{i=1}^{N} \mathbf{1}\big[d(L_{inf}^{(i)}, L_t^{(i)}) \le \tau\big],
\end{align} where $\mathbf{1}[\cdot]$ is the indicator function. In our experimental evaluation, we set $\tau=100$ meters by default, following the commonly adopted evaluation standard in location privacy research~\cite{andres2013geo}.

\noindent\textbf{Attack Cost (Cost).}
To measure the efficiency of the attack, we compute the average time required to complete an attack instance: 
\begin{align}
\text{Cost} = \frac{1}{N} \sum_{i=1}^{N} T_i,
\end{align} where $T_i$ denotes the time (seconds) to infer the location of the $i$-th target user.

Among these metrics, \textbf{Dist} and \textbf{Acc} reflect the \emph{attack effectiveness}, while \textbf{Cost} measures the \emph{attack efficiency}.

\vspace{-2mm}
\section{Our Attack Methods} \label{LIA}
In this section, we present two attack methods against KNNQ. We first introduce a geometry-based attack and analyze its inherent limitations in terms of communication overhead. We then propose a more communication-efficient attack framework that leverages zeroth-order optimization to achieve a better trade-off between attack effectiveness and efficiency.

\vspace{-1mm}
\subsection{A Straw-man Approach: GI-LIA}

The first approach to location inference is based on geometric triangulation. We refer to this baseline method as the Geometric Intersection Location Inference Attack (GI-LIA). The key idea is to infer the target's location by determining the intersection points of two circular trajectories on which the target must reside.

First, the attacker establishes an initial circular trajectory centered at an attack point $A_1$. Specifically, the attacker determines the radius $R_1$ of the circle $C(A_1, R_1)$ through a binary search procedure assisted by a collaborator. The collaborator moves radially with respect to $A_1$, while the attacker repeatedly issues kNNQ queries. By adjusting the collaborator's distance from $A_1$, the attacker identifies a distance $R_1$ at which the collaborator's rank in the query result becomes exactly $k$. Since the target also appears at rank $k$, the target must lie on the circumference of $C(A_1, R_1)$.

Next, the attacker selects a second attack point $A_2$ to obtain another geometric constraint. The attacker performs a search over candidate locations for $A_2$, ensuring that (i) $A_2$ is sufficiently distant from $A_1$ to provide geometric stability and (ii) the target remains within the kNNQ result set. Once such a point is found, the attacker repeats the binary search procedure to determine the second circular trajectory $C(A_2, R_2)$. The intersection of the two circles $C(A_1, R_1)$ and $C(A_2, R_2)$ yields two candidate locations $\{P_1, P_2\}$. The attacker then issues verification queries at these candidate points and selects the location that yields the lower rank for the target as the inferred target position. The total communication overhead of GI-LIA can be expressed as:

\begin{equation}
Cost_{GI-LIA} =
Cost_{binary}^{R_1}
+
Cost_{search}(A_2)
+
Cost_{binary}^{R_2}
+
2,
\end{equation}

where $Cost_{binary}$ denotes the query cost of a single binary search used to determine a circle radius, and $Cost_{search}(A_2)$ represents the cost of locating the second attack point $A_2$. The final term accounts for the two verification queries at candidate locations $P_1$ and $P_2$. Because both the binary search procedures and the search for $A_2$ require numerous queries, GI-LIA incurs substantial communication overhead, making it inefficient and impractical for real-time or stealthy attacks.

\begin{figure}[!t] 
    \centering 
    \includegraphics[width=0.75\columnwidth]{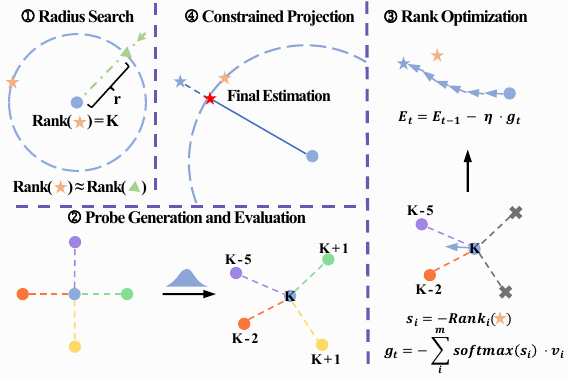} 
    \caption{The framework of ZO-LIA}
    \label{fig:zoa} 
\end{figure}

\vspace{-1mm}
\subsection{Improvement: ZO-LIA}

To reduce the excessive communication overhead of GI-LIA, we propose an efficient \textbf{zeroth-order optimization location inference attack (ZO-LIA)}. Unlike the GI-LIA, ZO-LIA avoids constructing a second geometric constraint. Instead, it performs a rank-guided local search within the feasible region defined by the first constraint circle and then projects the result onto the circle boundary. The overall process is illustrated in Figure~\ref{fig:zoa}.

\noindent \textbf{Phase \ding{172}: Radius Search.}  
ZO-LIA first determines the initial constraint circle $C(A_1, R_1)$ using the same radius search procedure described in GI-LIA. By performing a binary search with the help of a collaborator, the attacker finds a distance $R_1$ such that the collaborator appears at rank $k$ in the kNNQ results. Since the target also appears at rank $k$, the target must lie on the circumference of $C(A_1, R_1)$.

\noindent\textbf{Phase \ding{173}: Probe Generation and Evaluation.}  
Instead of constructing a second circle, ZO-LIA generates a set of probe points around the circle center $A_1$. These probes are sampled along multiple directions in the local neighborhood. For each probe point, the attacker issues a kNNQ query and records the target's rank. The rank serves as a score indicating how promising the corresponding direction is for approaching the target.

\noindent\textbf{Phase \ding{174}: Rank Optimization.}  
This phase constitutes the core of ZO-LIA. We employ a zeroth-order optimization strategy that relies solely on rank feedback. Using the rank scores obtained from the probe evaluations, the attacker estimates a pseudo-gradient that approximates the direction in which the target's rank decreases most rapidly. The attacker's position is then iteratively updated along this estimated direction, gradually moving toward an internal point $P_{opt}$ that minimizes the observed rank of the target.

\noindent\textbf{Phase \ding{175}: Constrained Projection.}  
Once the search converges to the internal point $P_{opt}$, a final constrained projection step is applied. Because the target must lie on the circumference of $C(A_1, R_1)$, the attacker orthogonally projects $P_{opt}$ onto the circle boundary. This projection is a simple local computation with constant complexity and produces the final estimate of the target's location. The total communication overhead of ZO-LIA is given by

\begin{equation}
Cost_{ZO-LIA} = Cost_{binary} + N_{iter} \cdot Cost_{0-iter},
\end{equation}

where $Cost_{binary}$ denotes the query cost of the initial radius search, $N_{iter}$ is the number of optimization iterations, and $Cost_{0-iter}$ represents the number of probe queries per iteration. 

\vspace{-1mm}
\subsection{Attack performance in real-world scenarios} \label{sub:Attack performance in real-world scenarios}
We evaluate our attacks on two real-world and two synthetic datasets. 
The evaluation metrics include \textit{Dist}, \textit{Acc}, and \textit{Cost} (see~\Sref{subsec:Measurement of the Attack}). 
We consider $k=10$, $k=30$, and $k=50$. 
For each $k$, we perform 5 independent runs with 50 attack instances per run, and report the average results. The results are shown in Table~\ref{tab:attack_performance_with_comm_final}. Other attack setups are provided in Appendix \ref{setup of LIA}.

\begin{table*}[t]
\centering
\caption{The performance of ZO-LIA and GI-LIA methods across four datasets. }
\label{tab:attack_performance_with_comm_final}
\setlength{\tabcolsep}{3pt}

\begin{tabular}{cl ccc ccc ccc}
\toprule
\multirow{2}{*}{\textbf{Method}} & \multirow{2}{*}{\textbf{Dataset}} & \multicolumn{3}{c}{\textbf{k=10}} & \multicolumn{3}{c}{\textbf{k=30}} & \multicolumn{3}{c}{\textbf{k=50}} \\
\cmidrule(lr){3-5} \cmidrule(lr){6-8} \cmidrule(lr){9-11}
& & \textbf{Dist} & \textbf{Acc} & \textbf{Cost} & \textbf{Dist} & \textbf{Acc} & \textbf{Cost} & \textbf{Dist} & \textbf{Acc} & \textbf{Cost} \\
\midrule

\multirow{4}{*}{\textbf{ZO-LIA}}
& brightkite & 28.91 & 0.948 & 15.94 & 40.59 & 0.914 & 17.62 & 41.98 & 0.912 & 19.21 \\
& gowalla    & 39.20 & 0.960 & 16.05 & 56.03 & 0.906 & 17.86 & 56.83 & 0.926 & 18.92 \\
& gaussian   & 24.35 & 0.956 & 22.14 & 42.17 & 0.910 & 23.21 & 43.56 & 0.893 & 24.08 \\
& beta       & 40.39 & 0.944 & 22.17 & 43.96 & 0.896 & 23.09 & 48.51 & 0.873 & 23.96 \\
\midrule

\multirow{4}{*}{\textbf{GI-LIA}}
& brightkite & 26.93 & 0.972 & 31.86 & 32.67 & 0.984 & 33.07 & 25.74 & 0.996 & 35.12 \\
& gowalla    & 20.59 & 0.996 & 31.42 & 24.55 & 0.992 & 34.67 & 23.56 & 0.992 & 35.98 \\
& gaussian   & 26.31 & 0.940 & 44.42 & 18.34 & 0.968 & 45.97 & 29.98 & 0.964 & 48.13 \\
& beta       & 39.96 & 0.948 & 43.21 & 41.67 & 0.948 & 45.07 & 51.84 & 0.976 & 47.31 \\
\bottomrule
\end{tabular}%

\end{table*}

\noindent\textbf{Attack effectiveness.}
The results demonstrate that both attacks achieve consistently high effectiveness across all datasets. In particular, both GI-LIA and ZO-LIA maintain success rates above 90\% under most settings, indicating that kNNQ-based LBS systems are highly vulnerable to location inference attacks. Among the two methods, GI-LIA achieves the best attack performance. It consistently yields the lowest average distance error and the highest success rates across datasets and k-values. For example, on the Brightkite dataset with $k=10$, GI-LIA achieves a success rate of 0.972 with an average distance error of only 26.93 meters. These results confirm that the geometric intersection strategy can accurately recover the target’s location.

\noindent\textbf{Attack efficiency.}
Although GI-LIA achieves the highest attack accuracy, it incurs significantly higher communication overhead. In contrast, ZO-LIA substantially reduces the attack cost while maintaining high success rates. As shown in Table~\ref{tab:attack_performance_with_comm_final}, ZO-LIA consistently requires roughly half the execution time of GI-LIA across all datasets and k-values. For instance, on the Brightkite dataset with $k=30$, the time overhead of ZO-LIA is 17.62 seconds, compared to 33.07 seconds for GI-LIA. This reduction implies fewer queries are needed, making ZO-LIA faster, stealthier, and more practical for real-world attacks.

\vspace{-2mm}
\section{Proposed Defense Method: DPRS} \label{defence method}
To protect location privacy under kNN queries, we propose the DPRS mechanism. DPRS perturbs the geographic locations in the dataset to produce a new location dataset that strictly satisfies differential privacy guarantees. As illustrated in Figure~\ref{fig:overall} and Algorithm \ref{algo:main_prism}, DPRS consists of two main components: (i) \textbf{Private Interval Construction (PIC):} Establishes a private, data-aware interval for the perturbation. (ii) \textbf{Reject Sampling Mechanism (RSM):} The core engine that efficiently perturbs data into the private interval constructed by the previous module.

\begin{figure*}[hbt]
    \centering 
    \includegraphics[width=0.9\textwidth]{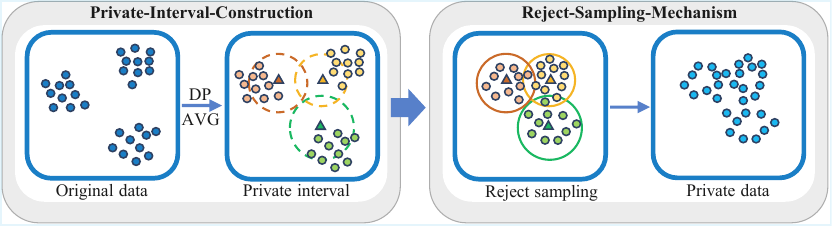}
    \caption{Overall framework of DPRS}
    \label{fig:overall}
\end{figure*}

\vspace{-2mm}
\subsection{Private Interval Construction} \label{construct interval}

\begin{algorithm}[t]
\caption{Private Interval Construction (PIC)}\label{algo:establishing centers}
\KwIn{dataset $D$, sensitivity $2r+1$, number of clusters $m$, number of iterations $N_c$, noise parameter of Lap distribution $\lambda_c$, radius scaling $\gamma$.}

\KwOut{$m$ private centroids.}

Initialize $k$ centroids by initial centroid selection.

\For{iter $\xleftarrow{} 1$ to $N_c$}{

    Get $m$ clusters through the standard k-means.
    
    Recalculate the centroid of each cluster.
    
    \For{$j \xleftarrow{} 1$ to $m$}{
    
        \For{$i \xleftarrow{} 1$ to $d$}{
        
            $sum'(C_j^*)[i] = sum(C_j^*)[i] + Lap(0,\lambda_c)$.
        }
        $num'(C_j^*) = num(C_j^*) + Lap(0,\lambda_c)$.
        
        $c_j = \frac{sum'(C_j^*)}{num'(C_j^*)}$.
    }  
}

\For{$j \xleftarrow{} 1$ to $m$}{
    Find the nearest centroid $c_{min}$ from all centroids excluding $c_j$.
    
    $R_j = \gamma \cdot distance(c_j, c_{min})$.
    
    $I_j = (c_j, R_j)$.
}

\Return $m$ private intervals $\{I_1, I_2, \dots, I_m\}$.
\end{algorithm}

The privacy interval is designed as a Euclidean ball, consisting of two components: a private center and a private radius. Ideally, to avoid introducing excessive noise offset, the private center should be as close as possible to the original sample. However, directly perturbing the original sample to serve as the interval center would introduce significant distance errors. To address this issue, the DP-k-means algorithm is introduced to obtain more stable and differentially private centers by injecting noise during the aggregation stage. In addition, the private radius is determined by the distance between the two nearest private centers, thereby eliminating the need for additional privacy budget.

As shown in Algorithm~\ref{algo:establishing centers}, PIC is implemented as follows. In each iteration, after samples are assigned to their clusters, the sensitive centroid recomputation step is protected. Specifically, the coordinate sums and the member counts of each cluster are concatenated, and calibrated Laplace noise is injected. The combined operation has a total $L_1$ sensitivity of $2r+1$. The private radius of each centroid is then set to half of the distance to its nearest neighboring centroid, which naturally defines the corresponding privacy interval.

\vspace{-1mm}
\subsection{Reject Sampling Mechanism} \label{rsm section}
\begin{algorithm}[!t]
    \caption{Rejection Sampling Mechanism (RSM)}
    \label{alg:rejection_sampling_single}
    
    \SetKwInOut{KwIn}{Input}
    \SetKwInOut{KwOut}{Output}
    
    \KwIn{
        data $\mathbf{x}$, norm $n$, interval $I$ (center $\mathbf{x}_c$, radius $R$), sensitivity $2r$, Lap distribution $p(\cdot|\mathbf{x},\lambda_p)$, target function $f(\cdot)\propto p(\cdot|\mathbf{x},\lambda_p)$.
    }
    \KwOut{
        perturbed data $\mathbf{x}'$.
    }
    \If{$\|\mathbf{x} - \mathbf{x}_c\|_k \le R$}{
    
        $M \leftarrow 1$.
        
    }
    \Else{
        $B \leftarrow \text{GetBoundary}(\mathbf{x}_c,R)$.
        
        $\mathbf{z}^* \leftarrow \min_{\mathbf{z} \in B} \|\mathbf{z} - \mathbf{x}\|_n$.
        
        $M \leftarrow f(\mathbf{z}^*)$.
    }
    \While{true}{
    
        $u,u_1,u_2\leftarrow \text{RandomUniform}(0, 1)$.
        
        $R \leftarrow R \sqrt{u_1}$, $\theta \leftarrow 2 \pi u_2$.
        
        $\mathbf{x}' \leftarrow \mathbf{x}_c + (R \cos(\theta),R \sin(\theta))$.
        
        \If{$u < \frac{f(\mathbf{x}')}{M}$}{
        
            \Return{$\mathbf{x}'$}.
        }
    }
    
\end{algorithm}

The rejection sampling mechanism efficiently perturbs private data into the privacy interval constructed by the PIC module. We begin by specifying a base noise distribution, typically $p(\cdot)=\text{Lap}(\cdot \mid \mathbf{x},\lambda)$. 
By normalizing the 2D location data to the range $[-r,r]$, the $L_1$ sensitivity is bounded by $2r$, which enables calibrated noise injection. 
To perturb data within the specified Euclidean sphere, we define $f(\cdot) \propto p(\cdot)$, thereby avoiding the computation of the normalization constant. Rejection sampling is then used to generate valid perturbed samples. 
This procedure requires an upper bound $M$ of $f(\cdot)$ within the sphere in order to compute the acceptance probability. 
As shown in Algorithm~\ref{alg:rejection_sampling_single} (Lines~1--6), a tight bound $M$ is obtained under either the $L_1$ norm (Laplace mechanism) or the $L_2$ norm (Gaussian mechanism) by evaluating a set of candidate points on the boundary of the sphere via $\text{GetBoundary}(\mathbf{x}_c, R)$ (Line 4). Furthermore, we also extend RSM to the Gaussian distribution in Appendix \ref{appendix:Gau}.

Given these components, the mechanism iteratively performs rejection sampling until a valid sample is accepted. 
Specifically, a candidate point $\mathbf{x}'$ is first drawn uniformly from the sphere. 
The acceptance ratio $f(\mathbf{x}')/M$ is then computed, followed by drawing a random variable $u \sim U(0,1)$. 
The candidate is accepted if $u \le f(\mathbf{x}')/M$; otherwise, the sample is rejected and the process repeats. Additionally, Theorem \ref{definition:rejection_sampling_single} and Theorem \ref{thm:utility-lap-truncated-2d} in the Appendix \ref{subsec:Lemma theorem supplementation} guarantee the distributional correctness of the rejection sampling mechanism and provide an upper bound on utility.

\vspace{-2mm}
\section{Privacy Analysis}\label{privacy analysis}

 We first analyze the privacy of the PIC (Algorithm~\ref{algo:establishing centers}) and RSM modules (Algorithm~\ref{alg:rejection_sampling_single}), then present the total privacy budget of DPRS (Algorithm \ref{algo:main_prism}).
\begin{algorithm}[t]

 \SetAlgoLined
 \DontPrintSemicolon
 \caption{Overall algorithm of DPRS}
 \label{algo:main_prism}

 \SetKwInOut{KwIn}{Input}
 \SetKwInOut{KwOut}{Output}
 
 \KwIn{
    dataset $\mathcal{D}$, cluster number $m$, iteration $N_c$, noise parameter $\lambda_p,\lambda_c$.
 }
 \KwOut{perturbed dataset $\mathcal{D}'$.}

 $\{ I_1, I_2, \dots, I_k \} \leftarrow \text{PIC} (\mathcal{D}, m, N_c, \lambda_c)$.

 Partition $\mathcal{D}$ into $m$ clusters $\{ C_1, C_2, \dots, C_m \}$.

 $\mathcal{D}' \leftarrow \emptyset$.
 
 \For{$j \leftarrow 1$ \KwTo $m$}{
 
    \For{$\mathbf{x} \in C_j$}{
    
        $\mathbf{x}' \leftarrow \text{RSM}(\mathbf{x}, I_j, \lambda_p)$.
        
        $\mathcal{D}' \leftarrow \mathcal{D}' \cup \{\mathbf{x}'\}$.
    }
 }
 \Return{$\mathcal{D}'$}.
 
\end{algorithm}

\subsection{Privacy Analysis of DPRS Components}

\begin{theorem} \label{definition:establishing centers}
Algorithm \ref{algo:establishing centers} satisfies $(\alpha, \frac{\epsilon_c}{\alpha - 1} )$-RDP, where
\begin{align}
\epsilon_c =   
    & 3N_c\log \left(\frac{\alpha}{2 \alpha - 1} \exp \left( \frac{\alpha - 1}{\lambda_c} \right) + \frac{\alpha - 1}{2 \alpha - 1} \exp \left( \frac{-\alpha }{\lambda_c} \right)\right).
\end{align}
\end{theorem}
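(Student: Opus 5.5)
The plan is to view Algorithm~\ref{algo:establishing centers} as an adaptive composition of $N_c$ rounds. In each round the data are touched only through the release of the noisy vector $(sum(C_j^*), num(C_j^*))_{j=1}^m$ with independent Laplace noise of scale $\lambda_c$ on every coordinate. Everything else is post-processing of earlier noisy releases: the k-means assignment in the next round, the centroid division $c_j=sum'/num'$, and the final radius step $R_j=\gamma\cdot distance(c_j,c_{min})$ (which is why no extra budget is spent there). So it suffices to bound the Rényi divergence of a single round and then apply the RDP composition theorem of~\cite{IlyaMironov2017RnyiDP}. Rényi divergences of adaptive compositions add, and post-processing does not increase them.

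The single-coordinate building block is the standard RDP bound for the Laplace mechanism. If $P=\mathrm{Lap}(0,\lambda)$ and $Q=\mathrm{Lap}(\mu,\lambda)$ with $|\mu|\le 1$ (after normalizing so that a unit change is the relevant shift), then
\begin{align}
(\alpha-1)D_\alpha(P\|Q)=\log\left(\frac{\alpha}{2\alpha-1}\exp\left(\frac{\alpha-1}{\lambda}\right)+\frac{\alpha-1}{2\alpha-1}\exp\left(\frac{-\alpha}{\lambda}\right)\right).
\end{align}
To verify this, I would split the integral $\int p^\alpha q^{1-\alpha}$ over the three regions $x<0$, $0\le x\le 1$, and $x>1$, evaluate each piece in closed form, and then check monotonicity in $|\mu|$ so that the worst case is $|\mu|=1$.

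Next I would account for how many unit-sensitivity Laplace coordinates one neighboring change affects per round. Changing one entry alters the coordinate sum of at most one cluster (removal/addition) by at most $2r$ in $L_1$, and the count by $1$. The count contributes one Laplace release. For the two coordinate sums, I would bound each coordinate's shift (after the normalization implicit in the sensitivity $2r+1$) so that it costs at most one more unit-Laplace term. This gives three such terms per round, which matches the factor $3$. Summing the three log-moment terms per round and composing over $N_c$ rounds gives $(\alpha-1)D_\alpha\le\epsilon_c$, that is, $(\alpha,\epsilon_c/(\alpha-1))$-RDP.

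The main obstacle is this sensitivity bookkeeping. I must justify that the total per-round cost is exactly three copies of the unit Laplace RDP term, consistent with the stated $L_1$ sensitivity $2r+1$ and scale $\lambda_c$. This requires checking that a single record perturbs only one cluster's statistics, rather than two when it moves between clusters under replacement-neighbors, and that each coordinate shift is normalized to at most one noise unit. If replacement neighbors touch two clusters, I would instead use add/remove neighbors, so that only one cluster's statistics change.
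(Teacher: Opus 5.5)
Your proposal is correct and follows essentially the same route as the paper's proof. The paper also takes add/remove neighbors, so only one cluster $C_J$ changes (parallel composition across clusters). It likewise treats each round as three unit-sensitivity one-dimensional Laplace mechanisms (two coordinate sums and the count), with the closed-form RDP you derive, and composes over the $N_c$ rounds to get the factor $3N_c$. Your explicit post-processing argument (next-round assignment, centroid division, radius step) and your justification for choosing add/remove neighbors spell out what the paper's proof leaves implicit.
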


\begin{proof}
Let $D$ be a dataset partitioned into $k$ disjoint clusters, $C_1, \dots, C_k$. A neighboring dataset $D'$ is formed by removing a single point, such that $D' = D \setminus \{x\}$ for some $x \in D$, and the partition on $D'$ is composed of clusters $C'_1, \dots, C'_k$. There exists an index $J$ such that $C'_J = C_J \setminus \{x\}$ and $C'_j = C_j$ elsewhere.

Therefore, each iteration can be regarded as the parallel composition of mechanisms querying function $f(\cdot)$ over $k$ disjoint clusters, and differential privacy achievement is determined by the mechanism over cluster $C_J$ (since mechanisms over other clusters $C_j^{'}$ achieve 0-differential privacy for $C_j = C_{j}^{'}$).

Let $p(\cdot)$ and $p'(\cdot)$ denote the probability density functions of the mechanism over cluster $C_J$. For any point $v \in D^{2 \times N}$, its RDP characterization is formed by the composition of three single-dimension Laplace mechanisms, each with a sensitivity of 1, iterated $N_c$ times. The first two mechanisms correspond to the position dimensions, and the last one corresponds to the counting dimension. According to Lemma \ref{lem:composition} (\Sref{subsec:dprs analysis}), its privacy characterization is as follows:

\begin{align}
\begin{split}
\begin{aligned}
D_{\alpha}(p(v)||p'(v)) = \frac{3N_c}{\alpha-1}\log\left(\frac{\alpha}{2\alpha-1}\exp\left(\frac{\alpha-1}{\lambda}\right)+\frac{\alpha-1}{2\alpha-1}\exp\left(-\frac{\alpha}{\lambda}\right)\right).
\end{aligned}
\end{split}
\end{align}
Therefore, Theorem \ref{definition:establishing centers} is proved.

\end{proof}

\begin{theorem} \label{definition:selective publish of Laplace}
Algorithm~\ref{alg:rejection_sampling_single} satisfies $(\alpha, \frac{\epsilon_p}{\alpha - 1} )$-RDP, where
\begin{align}
\epsilon_p =   
    & 2\log\left(\frac{\alpha}{2 \alpha - 1} \exp \left( \frac{\alpha - 1}{\lambda_p} \right) + \frac{\alpha - 1}{2 \alpha - 1} \exp \left( \frac{-\alpha}{\lambda_p} \right)\right).
\end{align}

\end{theorem}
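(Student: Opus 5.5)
The plan follows the proof of Theorem~\ref{definition:establishing centers}, now applied to a single location $\mathbf{x}$ perturbed by RSM. Take two neighboring inputs $\mathbf{x},\mathbf{x}'\in[-r,r]^2$ that share the same interval $I=(\mathbf{x}_c,R)$. The interval comes from PIC, which is accounted for separately and is treated as fixed public post-processing here.

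\textbf{Step 1: the output density of RSM.} First I would identify the law of the output. By the standard correctness of rejection sampling (Theorem~\ref{definition:rejection_sampling_single}), the accepted sample has density $q_{\mathbf{x}}(\mathbf{y}) = p(\mathbf{y}\mid\mathbf{x},\lambda_p)\mathbf{1}[\mathbf{y}\in I]/Z_{\mathbf{x}}$, where $Z_{\mathbf{x}}=\int_I p(\mathbf{y}\mid\mathbf{x},\lambda_p)\,d\mathbf{y}$. This requires $M\ge\sup_I f$, which is where the choice of $M$ in Lines~1--6 is used. In words, the output is the 2D product Laplace density truncated to the ball $I$.

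\textbf{Step 2: reduce to the untruncated mechanism.} Next I would write $D_\alpha(q_{\mathbf{x}}\|q_{\mathbf{x}'})$ explicitly. The truncation contributes the normalizer ratio $Z_{\mathbf{x}'}/Z_{\mathbf{x}}$ raised to a power. I would use the fact that restricting both distributions to the same measurable set and renormalizing is the conditioning of the product Laplace mechanism on the event $\{\mathbf{y}\in I\}$. The goal is to show that this conditioning does not increase the Rényi divergence beyond that of the full Laplace mechanism, or at least to absorb the normalizer ratio. Once that holds, the bound is the RDP of a 2D Laplace mechanism. This mechanism is a composition of two one-dimensional Laplace mechanisms with scale $\lambda_p$, and after normalizing by the per-coordinate sensitivity it has unit sensitivity per coordinate, exactly as in the proof of Theorem~\ref{definition:establishing centers}. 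The one-dimensional Laplace RDP formula together with composition, Lemma~\ref{lem:composition}, then gives twice $\log\bigl(\frac{\alpha}{2\alpha-1}e^{(\alpha-1)/\lambda_p}+\frac{\alpha-1}{2\alpha-1}e^{-\alpha/\lambda_p}\bigr)$ divided by $\alpha-1$, which is the stated $\epsilon_p/(\alpha-1)$.

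\textbf{Main obstacle.} The hard step is Step~2, the truncation. Conditioning on an event can in general increase divergence. I would first try to bound the ratio $Z_{\mathbf{x}'}/Z_{\mathbf{x}}$ pointwise: Laplace densities with shifted centers differ by at most a factor $e^{\|\mathbf{x}-\mathbf{x}'\|_1/\lambda_p}$, so this ratio is at most that factor. If that is not tight enough, I would argue via the paper's characterization of the post-rejection noise: express the Rényi divergence of the truncated pair as the divergence of the full Laplace pair restricted to $I$, and show that the restriction's normalizers cancel under the data-processing inequality applied to the output-in-$I$ channel. If neither route yields the exact constant, the honest fallback is a bound equal to the stated $\epsilon_p$ plus an additive normalizer term.
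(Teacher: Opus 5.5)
Your Step~1 and the final composition step are fine, but Step~2 carries the entire content of the theorem, and none of your routes closes it. Write $S$ for the ball $I$ and $p_{\mathbf{x}}=p(\cdot\mid\mathbf{x},\lambda_p)$. Restricting to $S$ and renormalizing is \emph{conditioning} on the event $\{\mathbf{y}\in S\}$, not a Markov kernel applied to the output of the untruncated mechanism. So the data-processing inequality says nothing here, and the normalizers do not cancel: $D_\alpha(q_{\mathbf{x}}\|q_{\mathbf{x}'})=\frac{1}{\alpha-1}\log\bigl(I_\alpha(S)\,Z_{\mathbf{x}}^{-\alpha}Z_{\mathbf{x}'}^{\alpha-1}\bigr)$ with $I_\alpha(S)=\int_S p_{\mathbf{x}}^{\alpha}p_{\mathbf{x}'}^{1-\alpha}$, and both $I_\alpha(S)$ and the normalizer factor depend on $S$. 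The pointwise bound on $Z_{\mathbf{x}'}/Z_{\mathbf{x}}$ only gives $\log(q_{\mathbf{x}}/q_{\mathbf{x}'})\le 2\|\mathbf{x}-\mathbf{x}'\|_1/\lambda_p$, i.e.\ a doubled-sensitivity guarantee. Your fallback is, by design, a weaker statement. As written, the proposal does not establish the stated $\epsilon_p$.

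The paper fills Step~2 with Lemma~\ref{theorem:rdp_new_feature}, which says the truncated divergence $D_\alpha(S)$ is monotone in $S$, applied with $S_2=\mathbb{R}^2$. The rest of its proof is your composition step plus the explicit one-dimensional Laplace integral with $\mu=1$. However, your worry is justified: that lemma is false in general. Its proof divides the two lower bounds of Lemma~\ref{theorem_A} (for $S_{rem}$ and for $S_1$) to obtain a lower bound on $I_\alpha(S_{rem})/I_\alpha(S_1)$, which would require an \emph{upper} bound on $I_\alpha(S_1)$. A concrete counterexample: take $P=\mathrm{Lap}(1,1)$, $Q=\mathrm{Lap}(0,1)$ and $S=(-\infty,1]$. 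Then $Z_P=\frac12$, $Z_Q=1-\frac12e^{-1}$, and $I_\alpha(S)=\frac12\bigl(e^{-\alpha}+\frac{e^{\alpha-1}-e^{-\alpha}}{2\alpha-1}\bigr)$. At $\alpha=10$ the truncated divergence is about $1.16$, against $0.93$ untruncated; the bounded interval $[-3,1]$ gives about $1.15$. In 2D, take $\mathbf{x}-\mathbf{x}'=(1,1)$, $\lambda_p=1$, and balls of growing radius whose boundary passes through $\mathbf{x}$ and which increase to the half-plane $\{\mathbf{y}:y_1<x_1\}$. The product structure and monotone convergence then give a truncated divergence tending to about $1.16+0.93$, which exceeds the claimed $2\times 0.93$. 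So Step~2 cannot be completed with the exact constant at this generality. A correct version needs either a weaker bound (e.g.\ your doubled-sensitivity route) or extra assumptions on where $S$ sits relative to the inputs.
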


\begin{proof}
 let $D_{\alpha}(S)$ be RDP of Algrithm \ref{alg:rejection_sampling_single},  according to Lemma \ref{theorem:rdp_new_feature} (Appendix \ref{subsec:Lemma theorem supplementation}) we have:    
\begin{align}
\begin{split}
\begin{aligned}
 D_{\alpha}(S) \leq D_{\alpha}(Lap(\mathbf{t}|\mathbf{0},\lambda)||Lap(\mathbf{t}|\mathbf{\mu},\lambda)).
\end{aligned}
\end{split}
\end{align}
In fact, the Rényi divergence of the multidimensional Laplace distribution has no analytical expression. Therefore, we characterize the Rényi differential privacy of the multidimensional Laplace distribution as a combination of different dimensions. According to Lemma \ref{lem:composition} (\Sref{subsec:dprs analysis}), we have:
\begin{align}
\begin{split}
\begin{aligned}
D_{\alpha}(Lap(\mathbf{t}|\mathbf{0},\lambda)||Lap(\mathbf{t}|\mathbf{\mu},\lambda)) = 2 D_{\alpha}(\Lambda(0,\lambda)||\Lambda(\mu,\lambda)).
\end{aligned}
\end{split}
\end{align}
Therefore, we have:
\begin{align}
\begin{split}
\begin{aligned}
&D_{\alpha}(S) \\
&\leq\frac{2}{\alpha-1}\log\int_{-\infty}^\infty\left(\frac{1}{2\lambda}e^{-\frac{|x-\mu|}{\lambda}}\right)^\alpha\left(\frac{1}{2\lambda}e^{-\frac{|x|}{\lambda}}\right)^{1-\alpha}dx\\
& = \frac{2}{\alpha-1}\log\left(\frac{1}{2\lambda}\int_{-\infty}^{\infty}\exp\left(-\frac{\alpha|x-\mu|-(\alpha-1)|x|}{\lambda}\right)dx\right)\\
&=\frac{2}{\alpha-1}\log \Bigg(\frac{1}{2\lambda}\Bigg(\int_{-\infty}^0e^{-\frac{\alpha\mu-x}{\lambda}}dx+ \int_0^\mu e^{-\frac{\alpha\mu-(2\alpha-1)x}{\lambda}}dx + \int_\mu^\infty e^{-\frac{x-\alpha\mu}{\lambda}}dx\Bigg)\Bigg)\\
&= \frac{2}{\alpha-1}\log \Bigg(\left(\frac{1}{2}-\frac{1}{2(2\alpha-1)}\right)e^{-\frac{a\mu}{\lambda}}+\left(\frac{1}{2}+\frac{1}{2(2\alpha-1)}\right)e^{\frac{(a-1)\mu}{\lambda}}\Bigg)\\
& =\frac{2}{\alpha-1}\log\Bigg(\frac{\alpha}{2\alpha-1}\exp\left(\frac{(\alpha-1)\mu}{\lambda}\right)+\frac{\alpha-1}{2\alpha-1}\exp\left(-\frac{\alpha\mu}{\lambda}\right)\Bigg).
\end{aligned}
\end{split}
\end{align}
For single-dimension Laplace noise, it is normalized to the range of $[-1, 1]$, so we set $\mu=1$. Therefore, Theorem \ref{definition:selective publish of Laplace} is proved.
\end{proof}

\vspace{-2mm}
\subsection{Overall Privacy Analysis of DPRS} \label{subsec:dprs analysis}
To derive the total privacy loss, we compose the guarantees from Theorem \ref{definition:establishing centers} and \ref{definition:selective publish of Laplace}. This relies on two standard lemmas from the RDP literature:

\begin{lemma}\label{lem:composition}
({\bf Composition Theorem \cite{IlyaMironov2017RnyiDP}}). Let two queries $f$, $g$ be $(\alpha,R_1)$ and $(\alpha,R_2)$-RDP respectively. Then their composition $(f,g)$ is $(\alpha,R_1+R_2)$-RDP.
\end{lemma}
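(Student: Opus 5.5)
The composition theorem of Lemma~\ref{lem:composition} follows from the definition of Rényi divergence, applied to the joint output of the two mechanisms. Let $D,D'$ be neighboring datasets. Consider the adaptive composition in which $g$ may receive the output of $f$ as auxiliary input. Write $P_f,P'_f$ for the output densities of $f$ on $D,D'$. Write $P_g(\cdot\mid a),P'_g(\cdot\mid a)$ for the conditional densities of $g$ given $f$'s output $a$; the non-adaptive case is the special case where these do not depend on $a$. The joint densities factor as
\begin{equation*}
P(a,b)=P_f(a)P_g(b\mid a),\qquad P'(a,b)=P'_f(a)P'_g(b\mid a).
\end{equation*}

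The key step is to expand the moment $\mathbf{E}_{(a,b)\sim P'}[(P/P')^\alpha]$ and integrate out $b$ first:
\begin{equation*}
\int\!\!\int P_f(a)^\alpha P'_f(a)^{1-\alpha}\,P_g(b\mid a)^\alpha P'_g(b\mid a)^{1-\alpha}\,db\,da .
\end{equation*}
For each fixed $a$, the inner integral equals $\exp\bigl((\alpha-1)D_\alpha(P_g(\cdot\mid a)\|P'_g(\cdot\mid a))\bigr)$. Because $g$ is $(\alpha,R_2)$-RDP for every fixed auxiliary input, this is at most $e^{(\alpha-1)R_2}$ uniformly in $a$, and $\alpha>1$ keeps the exponent direction correct. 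Pulling this constant out leaves $\int P_f^\alpha P_f'^{1-\alpha}\,da$, which is at most $e^{(\alpha-1)R_1}$ by the RDP guarantee of $f$.

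Finally, take $\frac{1}{\alpha-1}\ln$ of the product bound. This gives $D_\alpha(P\|P')\le R_1+R_2$ for every neighboring pair $D,D'$, so the composition is $(\alpha,R_1+R_2)$-RDP.

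The main subtlety is the uniform-in-$a$ bound for the conditional mechanism. We must take the RDP guarantee of $g$ to hold for every fixed value of its auxiliary input; without that uniformity, the adaptive case fails. Apart from this point, the argument is only the factorization and Fubini/Tonelli, which apply because all integrands are nonnegative.
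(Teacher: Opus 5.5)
Your proof is correct and is the standard argument of the cited Mironov reference: factor the joint density, integrate out the output $b$ of $g$ first, bound that inner integral by $e^{(\alpha-1)R_2}$ uniformly in $a$, then apply the guarantee for $f$. The paper gives no proof of its own and simply defers to that reference, so there is no separate route to compare.
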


\begin{lemma}\label{lem:conversion}
({\bf Conversion Theorem ~\cite{balle2020hypothesis}}). If  $f$ is an $(\alpha,R)$-RDP query ,then it satisfies$(R+\ln ((\alpha-1) / \alpha)-(\ln \delta+ \ln \alpha) /(\alpha-1), \delta)$-DP for any $0<\delta<1$.
\end{lemma}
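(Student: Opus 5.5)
The plan is to control the hockey-stick divergence directly by the $\alpha$-th moment of the likelihood ratio, which the RDP hypothesis bounds, rather than going through the standard Markov-type argument; the sharper pointwise bound is what should produce the extra $\ln((\alpha-1)/\alpha)$ and $-\ln\alpha/(\alpha-1)$ terms.

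Fix neighbouring datasets and write $P$, $Q$ for the output densities of $f$ on them. The hypothesis says $D_\alpha(P\|Q)\le R$, that is,
\begin{align}
\mathbf{E}_{x\sim Q}\left[\left(\tfrac{P(x)}{Q(x)}\right)^{\alpha}\right]\le e^{(\alpha-1)R}.
\end{align}
For any measurable $S$ and any $\epsilon$ we have
\begin{align}
P(S)-e^{\epsilon}Q(S)=\int_S \bigl(P(x)-e^{\epsilon}Q(x)\bigr)\,dx\le \mathbf{E}_{x\sim Q}\left[\left(\tfrac{P(x)}{Q(x)}-e^{\epsilon}\right)_+\right].
\end{align}
Therefore $(\epsilon,\delta)$-DP follows as soon as this right-hand side, the hockey-stick divergence, is at most $\delta$. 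Applying the same argument with the roles of $P$ and $Q$ swapped covers the other direction, because the RDP definition is symmetric over neighbouring pairs.

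The key step is a pointwise inequality. For $t\ge 0$ I want
\begin{align}
(t-e^{\epsilon})_+\le c_\alpha(\epsilon)\, t^{\alpha}
\end{align}
with the best possible constant. To find it, maximise $g(t)=(t-e^{\epsilon})t^{-\alpha}$ over $t>e^{\epsilon}$. The stationary point is at $t^*=\alpha e^{\epsilon}/(\alpha-1)$, and substituting gives
\begin{align}
c_\alpha(\epsilon)=\frac{1}{\alpha}\left(\frac{\alpha-1}{\alpha}\right)^{\alpha-1}e^{-(\alpha-1)\epsilon}.
\end{align}
Taking expectations under $Q$ and using the moment bound then yields
\begin{align}
\mathbf{E}_{Q}\left[\left(\tfrac{P}{Q}-e^{\epsilon}\right)_+\right]\le \frac{1}{\alpha}\left(\frac{\alpha-1}{\alpha}\right)^{\alpha-1}e^{(\alpha-1)(R-\epsilon)}.
\end{align}

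The last step is to set this bound equal to $\delta$ and solve for $\epsilon$. Taking logarithms gives
\begin{align}
(\alpha-1)\epsilon=(\alpha-1)R+(\alpha-1)\ln\tfrac{\alpha-1}{\alpha}-\ln\alpha-\ln\delta,
\end{align}
which is exactly $\epsilon=R+\ln((\alpha-1)/\alpha)-(\ln\delta+\ln\alpha)/(\alpha-1)$, as claimed. Any larger $\epsilon$ only decreases the bound, so the guarantee holds for every $0<\delta<1$.

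I expect the main obstacle to be the second step: spotting that the optimal constant in the pointwise bound is what generates the improved terms, and checking that $t^*$ is a maximum rather than only a stationary point. For the latter, $g(e^{\epsilon})=0$ and $g(t)\to 0$ as $t\to\infty$ since $\alpha>1$, so the unique interior critical point must be the maximum. A smaller point to handle is the case where $P$ is not absolutely continuous with respect to $Q$. In that case $D_\alpha=\infty$ for $\alpha>1$, so finite $R$ rules it out.
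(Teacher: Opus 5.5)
Your proof is correct and self-contained. Integrating the optimal pointwise bound $(t-e^{\epsilon})_+\le c_\alpha(\epsilon)t^{\alpha}$ against $Q$ and using $\mathbf{E}_{Q}[(P/Q)^{\alpha}]\le e^{(\alpha-1)R}$ gives $\delta=\frac{1}{\alpha}\bigl(\frac{\alpha-1}{\alpha}\bigr)^{\alpha-1}e^{(\alpha-1)(R-\epsilon)}$, which rearranges exactly to the stated $\epsilon$. Your checks on the maximiser, on absolute continuity and on both orderings of a neighbouring pair are sound.

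The paper itself gives no proof. It imports the lemma from Balle et al., who derive it from a hypothesis-testing analysis rather than from a bound on the likelihood ratio. That argument works at the level of a single event or test. Post-processing by the indicator of $S$ and the data-processing inequality turn the hypothesis into a two-point Rényi constraint on $p=P(S)$, $q=Q(S)$. Discarding the complementary term $(1-p)^{\alpha}(1-q)^{1-\alpha}$ leaves $p\le e^{(\alpha-1)R/\alpha}q^{(\alpha-1)/\alpha}$. The least $\delta$ for which the line $e^{\epsilon}q+\delta$ dominates this concave curve on $q\ge 0$ is attained at the tangency point $q^{*}=e^{(\alpha-1)R}\bigl(\frac{\alpha-1}{\alpha}\bigr)^{\alpha}e^{-\alpha\epsilon}$, and it coincides with your $\delta$.

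Your route is shorter and bounds the hockey-stick divergence $\sup_S[P(S)-e^{\epsilon}Q(S)]$ in one step. It therefore adapts at once to any other moment bound $\mathbf{E}_Q[\phi(P/Q)]\le B$ with $\phi\ge 0$, through the constant $\sup_{t}(t-e^{\epsilon})_+/\phi(t)$. The event-level route needs nothing beyond data processing and matches the type~I/type~II error interpretation that motivates the cited work. It also shows where the slack lies: keeping the discarded term can only sharpen the conversion, at the cost of the closed form.
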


By applying Lemma \ref{lem:composition} to the results of Theorem \ref{definition:establishing centers}, \ref{definition:selective publish of Laplace}, and then applying Lemma \ref{lem:conversion}, we arrive at the final privacy guarantee for the DPRS framework:
\begin{theorem} \label{definition:overall theorem}
Algorithm \ref{algo:main_prism} satisfies $(\epsilon_{Total},\delta)$-DP, where
\begin{align}
\epsilon_{Total} =   \frac{\epsilon_c+\epsilon_p}{\alpha-1} + \log(\frac{\alpha-1}{\alpha}) -\frac{\log(\delta)+\log(\alpha)}{\alpha-1}.
\end{align}
\end{theorem}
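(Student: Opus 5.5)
The plan is to view Algorithm~\ref{algo:main_prism} as a sequential composition of two stages. The first stage is the PIC call, which uses the data to release the private intervals $\{I_1,\dots,I_m\}$. The second stage is the family of RSM calls, each of which takes its interval $I_j$ as an already-published auxiliary input. I will fix a neighboring pair $\mathcal{D},\mathcal{D}'$ differing in one record $\mathbf{x}$ and bound the Rényi divergence of order $\alpha$ between the joint output distributions $(\{I_j\},\mathcal{D}')$ under the two datasets.

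\textbf{Step 1: PIC stage.} By Theorem~\ref{definition:establishing centers}, the released intervals satisfy $(\alpha,\epsilon_c/(\alpha-1))$-RDP. The radii $R_j=\gamma\cdot distance(c_j,c_{min})$ are deterministic functions of the private centroids. By post-processing they add no cost, so this bound covers the entire interval output.

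\textbf{Step 2: RSM stage.} Condition on the released intervals. The partition in line 2 is then a function of the data and the public centers, and each record is perturbed independently by RSM using only itself and its interval. Changing one record therefore affects exactly one RSM invocation. This is parallel composition over disjoint records, so the per-record guarantee of Theorem~\ref{definition:selective publish of Laplace}, namely $(\alpha,\epsilon_p/(\alpha-1))$-RDP, bounds the conditional divergence of the whole perturbed dataset.

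\textbf{Main obstacle.} The RSM bound must hold uniformly for every fixed interval $I$. Neighboring inputs may place $\mathbf{x}$ in different clusters, and the bound $M$ depends on $\mathbf{x}$. I would first rely on the fact that Lemma~\ref{theorem:rdp_new_feature}, as used in Theorem~\ref{definition:selective publish of Laplace}, bounds the truncated, renormalized Laplace output by the untruncated Laplace divergence for any interval. This makes the conditional bound independent of which interval was released. Any residual effect of cluster reassignment is absorbed by treating it as the replace-one case that the sensitivity $2r$ already covers.

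\textbf{Steps 3 and 4: composition and conversion.} With the conditional bound uniform in the first-stage output, adaptive composition (Lemma~\ref{lem:composition}) gives $(\alpha,(\epsilon_c+\epsilon_p)/(\alpha-1))$-RDP for Algorithm~\ref{algo:main_prism}. Lemma~\ref{lem:conversion} with $R=(\epsilon_c+\epsilon_p)/(\alpha-1)$ then yields $(\epsilon_{Total},\delta)$-DP for every $0<\delta<1$, with $\epsilon_{Total}$ exactly the expression in the statement.
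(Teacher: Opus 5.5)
Your Steps 1, 3 and 4 are exactly the paper's proof. That proof only applies Lemma~\ref{lem:composition} to Theorems~\ref{definition:establishing centers} and~\ref{definition:selective publish of Laplace} and then Lemma~\ref{lem:conversion}, and it tacitly treats the interval handed to each RSM call as fixed. You are right that this needs justification, but your resolution of the ``main obstacle'' does not work.

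The dependence of $M$ on $\mathbf{x}$ is harmless: as long as $M$ dominates $f$ on the ball, Theorem~\ref{definition:rejection_sampling_single} gives the truncated Laplace law regardless of its value. The dependence of the \emph{interval} on $\mathbf{x}$ is not harmless. Lemma~\ref{theorem:rdp_new_feature} compares $P$ and $Q$ truncated to one common set $S$. So Theorem~\ref{definition:selective publish of Laplace} is uniform over intervals only when both neighboring inputs are sampled inside the same $I$.

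In Algorithm~\ref{algo:main_prism}, however, the interval is chosen by the partition in line 2, i.e.\ by the record itself. Suppose the differing record is $\mathbf{x}\in C_j$ in one dataset and $\mathbf{x}''\in C_l$, $l\neq j$, in the other. Its output is supported on $I_j$ in one case and on $I_l$ in the other. Since $I_j\neq I_l$, one law puts positive mass where the other vanishes, so the Rényi divergence (in at least one direction) is $+\infty$ for every $\alpha>1$. With the default $\gamma=1/2$ one even has $R_j+R_l\le\|c_j-c_l\|$, so the balls are disjoint and each released point reveals its record's cluster index exactly.

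The sensitivity $2r$ bounds the shift of the Laplace location, not a change of support, so nothing is ``absorbed''. Your Step 2 bound holds only for neighbors whose differing record stays in the same cluster, and Step 3 then has no uniform conditional bound to compose.

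Moving to ``the replace-one case'' also breaks compatibility between the two guarantees you compose:
\begin{itemize}
\item Theorem~\ref{definition:establishing centers} is proved for $D'=D\setminus\{x\}$, where a single cluster loses one point. Under replacement, two clusters' sums and counts can change in the same iteration, so the stated $\epsilon_c$ does not cover it.
\item Under add/remove, the RSM stage emits one point per record inside that record's ball. With disjoint balls it therefore publishes the exact cluster sizes $|C_j|$ that PIC protects with Laplace noise.
\end{itemize}

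The composition argument, yours and equally the paper's one-line version, therefore needs an ingredient neither supplies. One of the following is required:
\begin{itemize}
\item the interval used for a record must not depend on that record's value (e.g.\ a public assignment); or
\item the cluster-index choice must itself be privatized and charged in the composition; or
\item the neighboring relation must be restricted to changes within one interval.
\end{itemize}
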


\vspace{-2mm}
\section{Evaluation} \label{evaluation}

In this section, we mainly examine the trade-off between defense effectiveness and KNNQ utility achieved by DPRS and the baselines in the KNNQ task, and further analyze the impact of key parameters on DPRS. There are additional experiment results that can be found in Appendix \ref{additional experiment}.

\begin{table}[t!]
\centering
\caption{Statistics of the datasets}
\label{tab:dataset_stats}
\resizebox{0.8\textwidth}{!}{
\begin{tabular}{ccccc}
\toprule
\textbf{Type} & \textbf{Dataset} & \textbf{Locations} & \textbf{Density} & \textbf{Dispersion Index} \\
\midrule
\multirow{2}{*}{Real-World} & Brightkite~\cite{cho2011friendship} & 18,409 & 11.76 & 0.26\% \\
 & Gowalla~\cite{cho2011friendship} & 17,738 & 11.33 & 0.32\% \\
\midrule 
\multirow{2}{*}{Synthetic} & Gaussian~\cite{10.5555/3540261.3542251} & 25,000 & 15.97 & 0.61\% \\
 & Beta~\cite{10.1145/3318464.3389700} & 25,000 & 15.97 & 0.70\% \\
\bottomrule
\end{tabular}
}
\end{table}

\vspace{-1mm}
\subsection{\textbf{Setup}}

We evaluate DPRS and three baselines (SRR~\cite{wang2022srr}, Square \cite{hong2022collecting} and Laplace~\cite{andres2013geo}) on four datasets  under various $k$-NN settings. There are two real-world datasets (Brighktite~\cite{cho2011friendship} and Gowalla~\cite{cho2011friendship}) and two synthetic datasets (Gaussian~\cite{10.5555/3540261.3542251} and Beta~\cite{10.1145/3318464.3389700}). We restrict the real-world datasets to the San Francisco region, with latitude in $[37.5, 37.9]$ and longitude in $[-122.6, -122.2]$. The synthetic datasets are generated from Gaussian $N(0,1)$ and Beta $B(2,5)$ distributions. Dataset details are shown in Table~\ref{tab:dataset_stats}. Detailed descriptions of evaluation metrics, baselines, and parameter settings are provided in Appendix \ref{Experiment Setup of DPRS}.

\begin{table*}[t]
\centering
\caption{Performance comparison of different defense mechanisms in the 10-NN task. \textcolor{blue}{Recall} and \textcolor{blue}{Ratio} measure query utility, where higher values indicate better query performance. \textcolor{red}{Acc} and \textcolor{red}{Dist} characterize the attack performance of ZO-LIA, where lower Acc and higher Dist indicate stronger defense effectiveness.}
\label{tab:main_results_wide_with_srr}
\setlength{\tabcolsep}{2pt} 
\resizebox{\textwidth}{!}{
\begin{tabular}{*{18}{c}}
\toprule

\multirow{2}{*}{\textbf{Dataset}} & \multirow{2}{*}{\textbf{Metric}} & \multicolumn{4}{c}{$\epsilon=0.5$} & \multicolumn{4}{c}{$\epsilon=1$} & \multicolumn{4}{c}{$\epsilon=3$} & \multicolumn{4}{c}{$\epsilon=5$} \\
\cmidrule(lr){3-6} \cmidrule(lr){7-10} \cmidrule(lr){11-14} \cmidrule(lr){15-18}
& & \textbf{DPRS} & \textbf{SRR} & \textbf{Square} & \textbf{Laplace} & \textbf{DPRS} & \textbf{SRR} & \textbf{Square} & \textbf{Laplace} & \textbf{DPRS} & \textbf{SRR} & \textbf{Square} & \textbf{Laplace} & \textbf{DPRS} & \textbf{SRR} & \textbf{Square} & \textbf{Laplace} \\
\midrule

\multirow{4}{*}{Brightkite}
 & \textcolor{blue}{Recall(↑)} & \textbf{0.434} & 0.382 & 0.389 & 0.353 & \textbf{0.464} & 0.423 & 0.414 & 0.398 & \textbf{0.472} & 0.436 & 0.440 & 0.416 & \textbf{0.489} & 0.447 & 0.448 & 0.423 \\
 & \textcolor{blue}{Ratio (↑)}  & \textbf{0.852} & 0.814 & 0.811 & 0.745 & \textbf{0.881} & 0.832 & 0.823 & 0.766 & \textbf{0.885} & 0.851 & 0.844 & 0.699 & \textbf{0.889} & 0.861 & 0.856 & 0.782 \\
\cmidrule{2-18}
 & \textcolor{red}{Acc (↓)}    & \textbf{0.012} & 0.016 & 0.018 & \textbf{0.012} & \textbf{0.014} & 0.020 & 0.018 & 0.016 & 0.020 & 0.024 & \textbf{0.018} & 0.022 & \textbf{0.022} & 0.028 & 0.024 & 0.026 \\
 & \textcolor{red}{Dist (↑)}   & 0.591 & 0.573 & 0.565 & \textbf{0.594} & \textbf{0.583} & 0.562 & 0.553 & 0.553 & 0.525 & 0.516 & \textbf{0.533} & 0.519 & \textbf{0.524} & 0.473 & 0.511 & 0.515 \\
\midrule

\multirow{4}{*}{Gowalla}
 & \textcolor{blue}{Recall(↑)} & \textbf{0.443} & 0.357 & 0.382 & 0.331 & \textbf{0.461} & 0.430 & 0.418 & 0.393 & \textbf{0.471} & 0.453 & 0.442 & 0.417 & \textbf{0.482} & 0.459 & 0.454 & 0.446 \\
 & \textcolor{blue}{Ratio (↑)}  & \textbf{0.827} & 0.773 & 0.779 & 0.705 & \textbf{0.872} & 0.830 & 0.815 & 0.737 & \textbf{0.879} & 0.844 & 0.840 & 0.749 & \textbf{0.886} & 0.853 & 0.848 & 0.797 \\
\cmidrule{2-18}
 & \textcolor{red}{Acc (↓)}    & 0.022 & \textbf{0.012} & 0.024 & 0.018 & \textbf{0.016} & 0.022 & 0.022 & 0.018 & \textbf{0.020} & 0.022 & 0.022 & \textbf{0.020} & 0.020 & 0.024 & \textbf{0.016} & 0.018 \\
 & \textcolor{red}{Dist (↑)}   & 0.553 & \textbf{0.587} & 0.544 & 0.572 & \textbf{0.530} & 0.481 & 0.499 & 0.517 & \textbf{0.517} & 0.486 & 0.476 & 0.497 & 0.412 & 0.409 & \textbf{0.476} & 0.446 \\
\midrule

\multirow{4}{*}{Gaussian}
 & \textcolor{blue}{Recall(↑)} & \textbf{0.644} & 0.612 & 0.587 & 0.541 & \textbf{0.655} & 0.637 & 0.624 & 0.607 & \textbf{0.664} & 0.643 & 0.630 & 0.611 & \textbf{0.672} & 0.649 & 0.645 & 0.629 \\
 & \textcolor{blue}{Ratio (↑)}  & \textbf{0.902} & 0.864 & 0.848 & 0.792 & \textbf{0.927} & 0.886 & 0.874 & 0.837 & \textbf{0.929} & 0.888 & 0.879 & 0.851 & \textbf{0.933} & 0.900 & 0.897 & 0.863 \\
\cmidrule{2-18}
 & \textcolor{red}{Acc (↓)}    & \textbf{0.008} & 0.014 & 0.018 & 0.012 & 0.020 & \textbf{0.016} & 0.022 & 0.020 & \textbf{0.014} & 0.016 & 0.018 & 0.018 & 0.018 & 0.022 & 0.020 & \textbf{0.016} \\
 & \textcolor{red}{Dist (↑)}   & \textbf{0.605} & 0.587 & 0.579 & 0.591 & 0.547 & \textbf{0.566} & 0.527 & 0.533 & \textbf{0.543} & 0.512 & 0.504 & 0.499 & 0.459 & 0.423 & 0.441 & \textbf{0.494} \\
\midrule

\multirow{4}{*}{Beta}
 & \textcolor{blue}{Recall(↑)} & \textbf{0.563} & 0.531 & 0.524 & 0.482 & \textbf{0.593} & 0.567 & 0.553 & 0.523 & \textbf{0.601} & 0.582 & 0.576 & 0.545 & \textbf{0.623} & 0.602 & 0.592 & 0.594 \\
 & \textcolor{blue}{Ratio (↑)}  & \textbf{0.827} & 0.751 & 0.734 & 0.678 & \textbf{0.861} & 0.782 & 0.768 & 0.714 & \textbf{0.873} & 0.807 & 0.796 & 0.749 & \textbf{0.886} & 0.822 & 0.815 & 0.782 \\
\cmidrule{2-18}
 & \textcolor{red}{Acc (↓)}    & \textbf{0.012} & 0.022 & 0.014 & 0.018 & 0.020 & 0.020 & 0.022 & \textbf{0.014} & 0.022 & 0.024 & \textbf{0.020} & 0.024 & \textbf{0.016} & 0.020 & 0.024 & \textbf{0.016} \\
 & \textcolor{red}{Dist (↑)}   & \textbf{0.616} & 0.586 & 0.608 & 0.597 & 0.564 & 0.553 & 0.576 & \textbf{0.599} & 0.512 & 0.494 & \textbf{0.529} & 0.481 & \textbf{0.574} & 0.517 & 0.483 & 0.564 \\
\bottomrule
\end{tabular}%
}
\end{table*}

\vspace{-2mm}
\subsection{\textbf{Tradeoff between Query Utility and Defense Effect}}

\textbf{Query utility.}
As shown in Table~\ref{tab:main_results_wide_with_srr}, DPRS consistently achieves the best query utility across all datasets and privacy budgets. It ranks first in both Recall and Ratio in all 16 settings. Averaged over all settings, DPRS attains a Recall of 0.546 and a Ratio of 0.882, outperforming SRR (0.513/0.835), Square (0.507/0.827), and Laplace (0.482/0.765). Compared with the strongest baseline in each setting, DPRS improves Recall and Ratio by 6.04\% and 5.33\% on average, respectively. For example, on Brightkite with $\epsilon=1$, DPRS improves Recall from 0.423 to 0.464 and Ratio from 0.832 to 0.881. On Beta with $\epsilon=1$, DPRS achieves a Ratio of 0.861, exceeding the best baseline by 0.08.

\noindent\textbf{Defense effect.}
DPRS also provides strong defense effectiveness against ZO-LIA \footnote[1]{As shown in Section~\ref{sub:Attack performance in real-world scenarios}, ZO-LIA achieves the strongest attack performance and is therefore used as the attack baseline.}. Averaged over all settings, it achieves the lowest attack success rate (Acc $=0.017$) and the largest distance error (Dist $=0.541$), compared with 0.020/0.520 for SRR, 0.020/0.525 for Square, and 0.018/0.536 for Laplace. In several cases, DPRS improves both utility and defense simultaneously. For instance, on Brightkite with $\epsilon=1$, DPRS reduces Acc from 0.02 to 0.014 and increases Dist from 0.562 to 0.583 compared with SRR. On Gaussian with $\epsilon=3$, DPRS achieves both the lowest Acc (0.014) and the highest Dist (0.543). Although DPRS is not always the best on every single attack metric, it remains highly competitive while preserving substantially better query utility.

\noindent\textbf{Summary.}
Overall, DPRS achieves the best trade-off between query utility and defense effect. It consistently dominates all baselines in Recall and Ratio, while also achieving the best average defense performance. These results verify that DPRS can effectively protect against location inference attacks without sacrificing kNNQ utility.

\begin{figure*}[t]
    \includegraphics[width=0.7\textwidth]{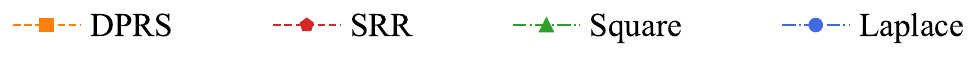}
  \centering
  
  \begin{subfigure}{0.26\linewidth}
    \centering
    \includegraphics[width=\linewidth]{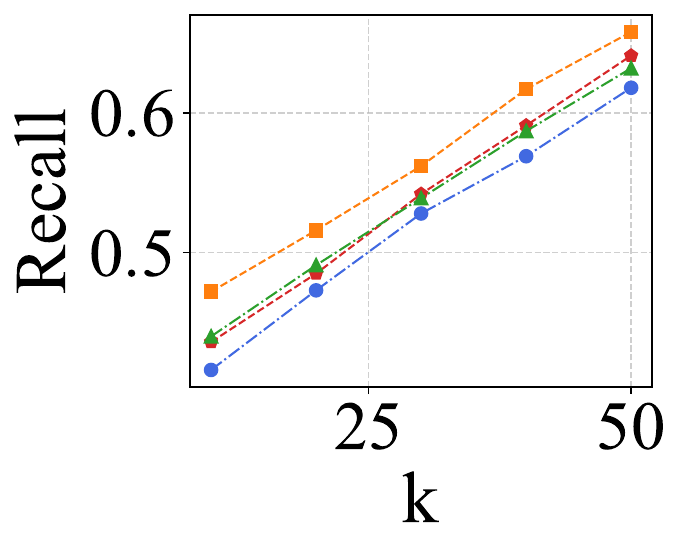}
    \captionsetup{skip=-2pt}
    \caption{Brightkite}
    \label{figure: k_ablation_brightkite}
  \end{subfigure}
  \begin{subfigure}{0.235\linewidth}
    \centering
    \includegraphics[width=\linewidth]{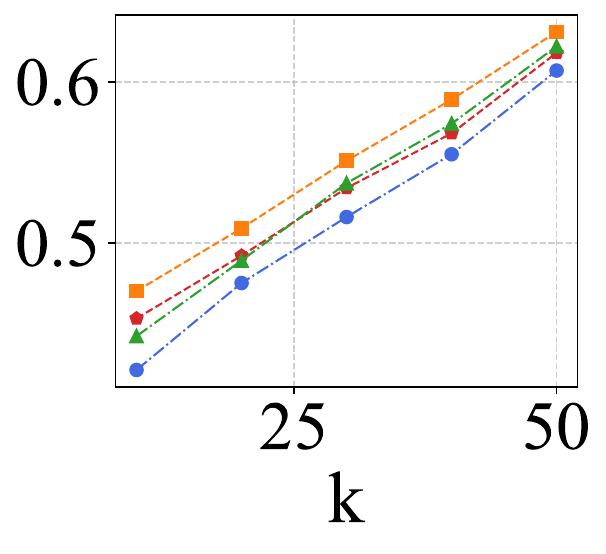}
    \captionsetup{skip=-2pt}
    \caption{Gowalla}
    \label{figure: k_variation_gowalla}
  \end{subfigure}
  \begin{subfigure}{0.235\linewidth}
    \centering
    \includegraphics[width=\linewidth]{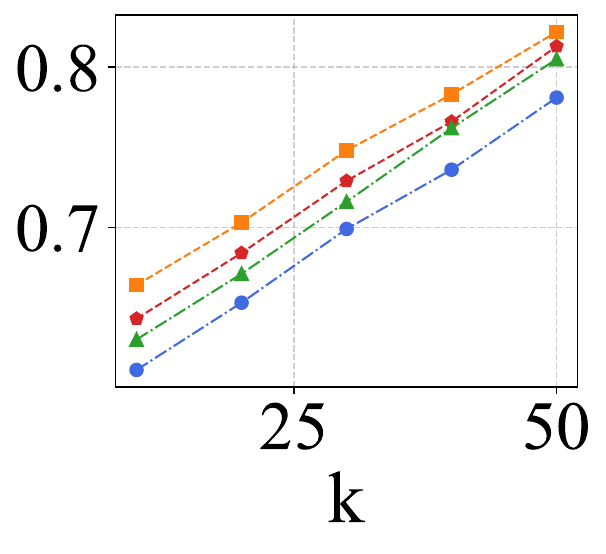}
    \captionsetup{skip=-2pt}
    \caption{Gaussian}
    \label{figure: k_variation_gaussian}
  \end{subfigure}
  \begin{subfigure}{0.235\linewidth}
    \centering
    \includegraphics[width=\linewidth]{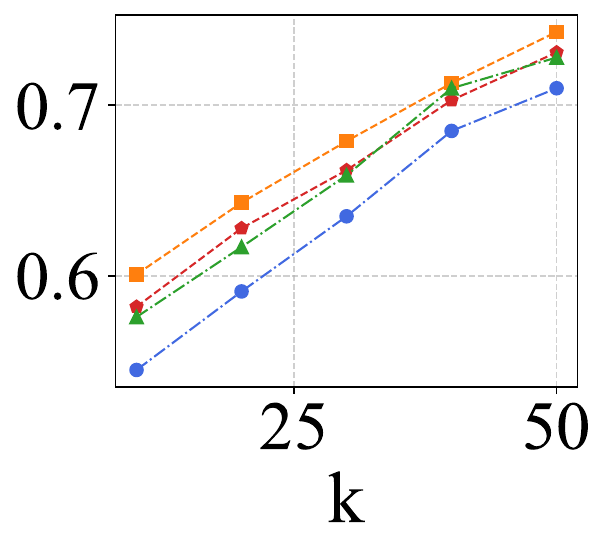}
    \captionsetup{skip=-2pt}
    \caption{Beta}
    \label{figure: k_variation_beta}
  \end{subfigure}
  \caption{Performance impact on the number of neighbors at $\epsilon=3$.}
  \label{figure: k_variation}
\end{figure*}

\begin{figure*}[t]
    \includegraphics[width=0.6\textwidth]{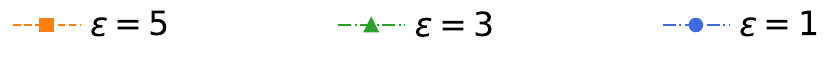}
  \centering

  \begin{subfigure}{0.26\linewidth}
    \centering
    \includegraphics[width=\linewidth]{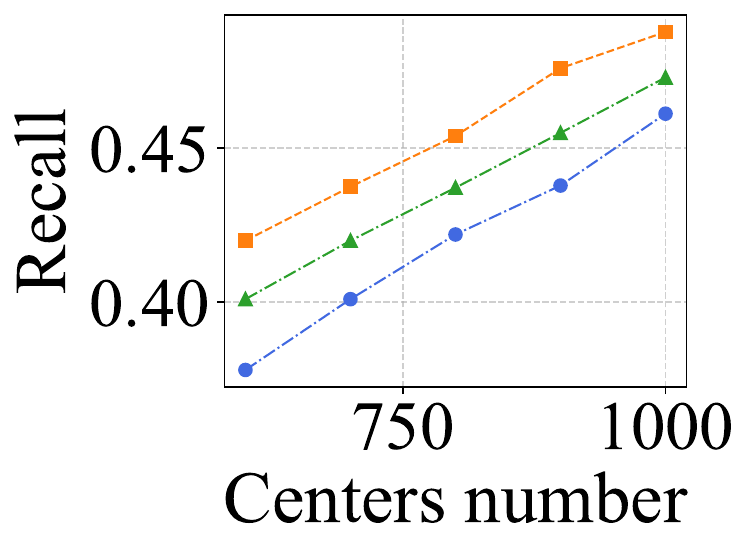}
    \captionsetup{skip=-2pt}
    \caption{Brightkite}
    \label{fig:span_sub1}
  \end{subfigure}
  \begin{subfigure}{0.235\linewidth}
    \centering
    \includegraphics[width=\linewidth]{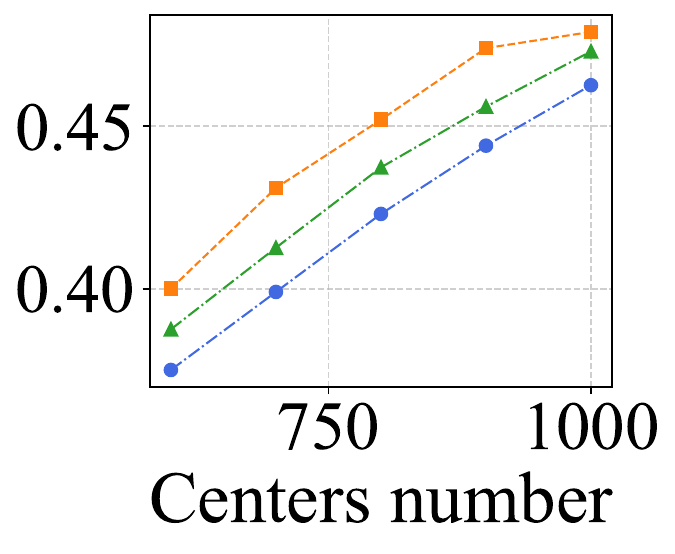}
    \captionsetup{skip=-2pt}
    \caption{Gowalla}
    \label{fig:span_sub2}
  \end{subfigure}
  \begin{subfigure}{0.235\linewidth}
    \centering
    \includegraphics[width=\linewidth]{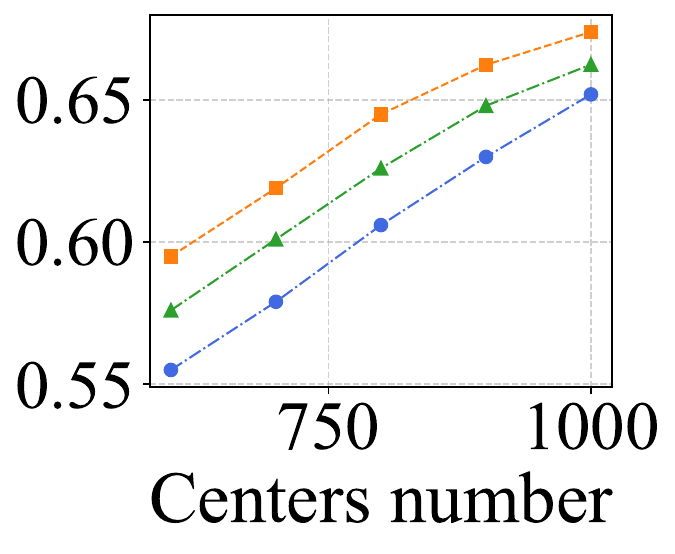}
    \captionsetup{skip=-2pt}
    \caption{Gaussian}
    \label{fig:span_sub3}
  \end{subfigure}
  \begin{subfigure}{0.225\linewidth}
    \centering
    \includegraphics[width=\linewidth]{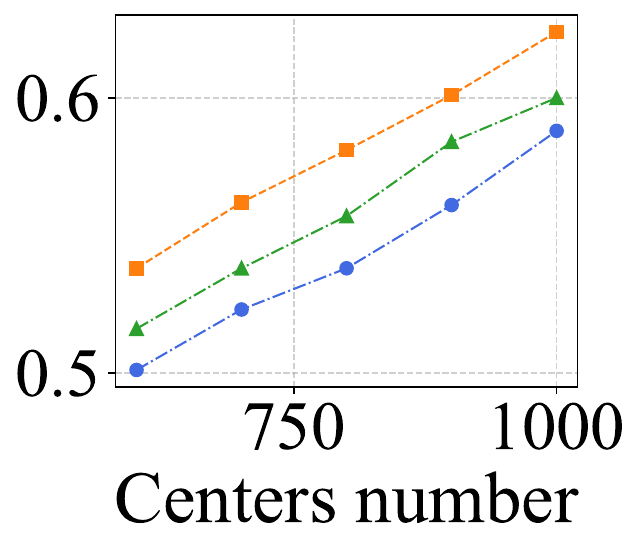}
    \captionsetup{skip=-2pt}
    \caption{Beta}
    \label{fig:span_sub4}
  \end{subfigure}

  \caption{Performance impact on the number of interval centers at $k=10$.}
  \label{fig:span_main}
\end{figure*}

\vspace{-2mm}
\subsection{\textbf{Parameter Sensitivity Analysis}}

\noindent\textbf{Impact of the Number of Neighbors.} We simulated scenarios with k ranging from 10 to 50 on two real-world datasets. As shown in Figure \ref{figure: k_variation}, we observe that the query recall consistently increases as the number of neighbors ($k$) grows. Furthermore, our method consistently outperforms the 2D Laplace, SRR, and Square mechanisms. This further demonstrates that the DPRS framework can effectively preserve the relative distances between neighboring samples.

\noindent\textbf{Impact of the Number of Centers.} Figure \ref{fig:span_main} illustrates the recall performance of our method as the number of interval centers increases from 600 to 1,000. We observe that the recall rate gradually rises with the number of perturbation centers. This is because more centers allow for a more detailed extraction of the dataset's features, thus capturing neighbor information at a finer granularity. 

\noindent\textbf{Impact of the Radius Scaling.} The size of the private interval directly determines the utility of the perturbed data, necessitating a careful trade-off for the radius scaling coefficient: An overly large coefficient leads to more frequent overlaps between different clusters, whereas a too small one disrupts the rank differences among nearest neighbors. As shown in Figure \ref{figure: threshold selection} in Appendix \ref{additional experiment}, setting the radius scaling coefficient to 0.5 is optimal.

\vspace{-1mm}
\subsection{Time Overhead Analysis}
DPRS avoids online query bottlenecks by performing PIC offline and periodically updating private intervals on LBS platforms. Furthermore, as shown in Figure \ref{figure: time overhead} in Appendix \ref{additional experiment}, our RSM incurs only lightweight online overhead given a private interval, which remains below 40ms and is substantially faster than standard noise-and-resampling methods. This demonstrates the scalability of DPRS to large-scale datasets.

\vspace{-2mm}
\section{Conclusion} \label{conclusion}
This work investigates the location privacy issues in kNNQ services. By proposing two attacks, GI-LIA and ZO-LIA, we reveal the privacy threats of kNNQ. To mitigate privacy risk, we propose DPRS, a differentially private framework tailored to high-sensitivity kNNQ. Both theoretical analysis and experimental results show that DPRS can effectively defend against LIA attacks while achieving better utility than state-of-the-art methods.

\vspace{-2mm}
\section*{Acknowledgment}
This work is supported by National Natural Science Foundation of China Key Program (Grant No. 62132005), Fundamental Research Funds for the Central Universities (Grant No. 40500-20104-222609).

\bibliographystyle{splncs04}
\bibliography{mybibliography}

\newpage

\appendix
\setcounter{figure}{5}
\setcounter{table}{2}
\setcounter{theorem}{3} 
\setcounter{lemma}{2}   

\section{Experiment Setup}

\subsection{\textbf{Experiment Setup of LIA}} \label{setup of LIA}

For the GI-LIA configuration, we set the maximum number of queries for the radius binary search to 100. The center of the second circle ($A_2$) is determined by generating four surrounding probes. The initial step size for these probes is equal to the radius, with a maximum of 10 retry rounds. If a round fails, the probe generation step size for the subsequent round is reduced by a factor of 0.8.

For the ZO-LIA configuration, which builds upon the GI-LIA settings, we set the number of rank optimization iterations to 10, with 4 probes generated per iteration. Furthermore, we filter out any probes that yield a rank worse than the best result from the previous iteration, thereby preventing them from contributing to the estimation of the current pseudo-gradient direction. The learning rate for this optimization process is set to 0.005.

\subsection{Experiment Setup of DPRS} \label{Experiment Setup of DPRS}

\subsubsection{\textbf{Baseline}}
 There are three baselines in our paper: \textit{1) 2D Laplace mechanism}\cite{andres2013geo}:  Perturbs true coordinates by adding noise from a two-dimensional Laplace distribution. \textit{2) SRR mechanism} \cite{wang2022srr}: Assigns a staircase of perturbation probabilities to location groups based on their distance from the true coordinates. \textit{3) Square mechanism} \cite{hong2022collecting}: Employs two distinct probabilities to sample locations, one for a square region around the user and another for the rest of the domain.

\subsubsection{\textbf{Utility Metric}} \label{metric}
For the k-NN task, we use the recall and distance ratio as the utility metrics to evaluate the performance. Given a query point $q$, the ground truth set is denoted as $G(q)$, and the set returned by the k-NN algorithm is denoted as $P(q)$. The recall and distance ratio are calculated as follows:
\begin{equation}
\label{eq:recall}
recall(q,k) = \frac{|P(q)\cap G(q)|}{k} .
\end{equation}
\begin{equation}
\label{eq:distance ratio}
distance\ ratio(q) = \frac{\sum_{g\in G(q)}{||g-q||_2}}{\sum_{p\in P(q)}{||p-q||_2}}. 
\end{equation}

\subsubsection{\textbf{Parameter Settings}}
 In our privacy budget design, we set $\epsilon_{p} = \epsilon_{c}$ and $\delta=10^{-5}$. The number of iterations for Algorithm \ref{algo:establishing centers} is set to 12. The radius scaling ranges from 0.2 to 0.8 times the initial radius, where the initial radius is defined as the distance from a given center to its nearest center. For location setting, we normalize the data to the range $[-1, 1]$.

\section{Gaussian Rejection Sampling Mechanism}\label{appendix:Gau}
Algorithm~\ref{alg:rejection_sampling_single_gaussian} describes the pipeline of applying DPRS to the Gaussian distribution. The main logic aligns with Algorithm \ref{alg:rejection_sampling_single}. The sensitivity of DPRS-Gau is the $L_2$ norm of the original range (i.e., $\sqrt{2}r$).

\section{Theoretical Analysis and Proofs}
\subsection{Theorem for Gaussian Rejection Sampling Mechanism} \label{Gaussian RSM}
\begin{algorithm}
    \caption{Gaussian-RSM}
    \label{alg:rejection_sampling_single_gaussian}
    
    \SetKwInOut{KwIn}{Input}
    \SetKwInOut{KwOut}{Output}
    
    \KwIn{
        data $\mathbf{x}$, interval $I$ (center $\mathbf{x}_c$, radius $R$), sensitivity $||\mu||_2 = \sqrt{2}r$, Gaussian distribution $p(\cdot|\mathbf{x},\sigma^2I_d)$, target function $f(\cdot)\propto p(\cdot|\mathbf{x},\sigma^2I_d)$.
    }
    \KwOut{
        perturbed data $\mathbf{x}'$.
    }
    
    \If{$\|\mathbf{x} - \mathbf{x}_c\|_k \le R$}{
    
        $M \leftarrow 1$.
        
    }
    \Else{
        $B \leftarrow \text{GetBoundary}(\mathbf{x}_c,R)$.
        
        $\mathbf{z}^* \leftarrow \min_{\mathbf{z} \in B} \|\mathbf{z} - \mathbf{x}\|_2$.
        
        $M \leftarrow f(\mathbf{z}^*)$.
    }
    
    \While{true}{
        $u,u_1,u_2\leftarrow \text{RandomUniform}(0, 1)$.
        
        $R \leftarrow R \sqrt{u_1}$, $\theta \leftarrow 2 \pi u_2$.

        $\mathbf{x}' \leftarrow \mathbf{x}_c + (R \cos(\theta),R \sin(\theta))$.
        
        \If{$u < \frac{f(\mathbf{x}')}{M}$}{
        
            \Return{$\mathbf{x}'$.}
            
        }
    }
    
\end{algorithm}

\begin{theorem} \label{definition:rejection_sampling_gaussian}
Let $p(\mathbf{t}|\mathbf{x},\sigma ^2)$ be a probability density function with support on a bounded domain $D \subset \mathbb{R}^2$, defined as:
\begin{equation}
\label{eq:q(x)}
p(\mathbf{t}|\mathbf{x},\sigma ^2 I_d) = 
\begin{cases} 
\frac{1}{N_C} Gau(\mathbf{t}|\mathbf{x},\sigma ^2 I_d) & \text{if } \mathbf{t} \in D \\
0 & \text{otherwise}, 
\end{cases}
\end{equation}
where $N_C = \iint_D Gau(\mathbf{t}|\mathbf{x},\sigma ^2I_d) \,d\mathbf{t}$. If a sample $\mathbf{x}'$ is generated by Algorithm \ref{alg:rejection_sampling_single_gaussian}, then the probability distribution of $\mathbf{x}'$ follows distribution $p(\mathbf{t}|\mathbf{x},\sigma ^2I_d)$.

The detailed proof of Theorem \ref{definition:rejection_sampling_gaussian} is provided in Theorem \ref{definition:rejection_sampling_single}.
\end{theorem}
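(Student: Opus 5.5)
The approach is the standard correctness argument for rejection sampling, with the proposal distribution taken to be uniform on the disk $D=\{\mathbf{t}:\|\mathbf{t}-\mathbf{x}_c\|_2\le R\}$.

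First, I will check that Lines 8--10 of Algorithm~\ref{alg:rejection_sampling_single_gaussian} produce a uniform candidate on $D$. The candidate has radius $\rho=R\sqrt{u_1}$ and angle $\theta=2\pi u_2$. Its radius satisfies $\Pr(\rho\le s)=s^2/R^2$, so the radius has density $2s/R^2$ and the angle is independent and uniform. The Jacobian of polar coordinates is $s$, so the joint density in Cartesian coordinates is the constant $g(\mathbf{t})=1/(\pi R^2)$ on $D$. (The algorithm overwrites $R$ on Line 9; I read it as using the interval radius in every round.)

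Second, I will verify the envelope condition $f(\mathbf{t})\le M$ on $D$, where $f(\mathbf{t})=\exp(-\|\mathbf{t}-\mathbf{x}\|_2^2/(2\sigma^2))$ is the unnormalised Gaussian. This step carries the real content and is the main obstacle. I split into two cases.
\begin{itemize}
\item If $\mathbf{x}\in D$, then $\sup_D f=f(\mathbf{x})=1=M$.
\item If $\mathbf{x}\notin D$, then $f$ is a decreasing function of $\|\mathbf{t}-\mathbf{x}\|_2$. By convexity of $D$, the point of $D$ closest to $\mathbf{x}$ lies on the boundary, namely $\mathbf{z}^*=\mathbf{x}_c+R(\mathbf{x}-\mathbf{x}_c)/\|\mathbf{x}-\mathbf{x}_c\|_2$. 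Hence $\sup_D f=f(\mathbf{z}^*)=M$.
\end{itemize}
If GetBoundary returns only a finite set of candidates, $M$ may slightly underestimate the supremum. I will state the result assuming the exact minimiser is used, which for the $L_2$ ball is available in closed form.

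Third, I will compute the output law. In one round, the probability of proposing a point in a set $A\subseteq D$ and accepting it is
\[
\int_A g(\mathbf{t})\,\frac{f(\mathbf{t})}{M}\,d\mathbf{t}=\frac{1}{\pi R^2 M}\int_A f(\mathbf{t})\,d\mathbf{t}.
\]
Setting $A=D$, the per-round acceptance probability is $a=\frac{1}{\pi R^2 M}\int_D f$. This is positive, because $f>0$ and $D$ has positive area, so the loop terminates almost surely.

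The rounds are i.i.d., so summing the geometric series over the round in which acceptance first occurs gives
\[
\Pr(\mathbf{x}'\in A)=\sum_{j\ge0}(1-a)^j\,\frac{\int_A f}{\pi R^2 M}=\frac{\int_A f}{\int_D f}.
\]
The normalising constant of $\mathrm{Gau}$ cancels in this ratio, so the right-hand side equals $\frac{1}{N_C}\int_A \mathrm{Gau}(\mathbf{t}\mid\mathbf{x},\sigma^2 I_d)\,d\mathbf{t}$, which is exactly the distribution $p$ in the statement. Sets $A$ disjoint from $D$ receive probability zero, matching $p$ off $D$.

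The same argument, with the $L_1$-norm Laplace density and its boundary minimiser, gives Theorem~\ref{definition:rejection_sampling_single}.
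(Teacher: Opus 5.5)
Your proposal is correct and follows the same route as the paper's proof (given under Theorem~\ref{definition:rejection_sampling_single}): a uniform proposal density $1/(\pi R^2)$ on the disk, acceptance probability $f/M$, total acceptance probability $\int_D f/(\pi R^2 M)$, and hence output density $f/N_C$. The paper conditions on acceptance via Bayes' rule where you sum the geometric series over i.i.d.\ rounds, which is equivalent; you also check three things the paper only asserts, namely uniformity of the polar sampling, that $M=f(\mathbf{z}^*)$ is a genuine envelope, and almost-sure termination.
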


\begin{theorem} \label{definition:selective publish of Gaussian}
Algorithm ~\ref{alg:rejection_sampling_single_gaussian} satisfies $(\alpha,{\alpha ||\mu||_2 ^2}/{2\sigma^2})-RDP$.

\begin{proof}
let $D_{\alpha}(S)$ be RDP of Algrithm \ref{alg:rejection_sampling_single_gaussian} according to Lemma \ref{theorem:rdp_new_feature}, we have: 
\begin{align}
\begin{split}
\begin{aligned}
&D_{\alpha}(S) \\
&\leq D_{\alpha}(N(0,\sigma^2I_d)||N(\mu,\sigma^2I_d))\\
&\leq \frac{1}{\alpha-1}\log\{\frac{1}{(2\pi\sigma^2)^{d/2}}\cdot \int_{\mathbb{R}^d}\exp\left(-\frac{\alpha\|x-\mu\|_2^2+(1-\alpha)\|x\|_2^2}{2\sigma^2}\right)d^dx\} \\
&\leq \frac{1}{\alpha-1}\log\{\frac{1}{(2\pi\sigma^2)^{d/2}} \cdot\int_{\mathbb{R}^d}\exp\left(-\frac{\alpha(x^Tx-2\mu^Tx+\mu^T\mu)+(1-\alpha)x^Tx}{2\sigma^2}\right)d^dx\}\\
&\leq \frac{1}{\alpha-1}\log\{\frac{1}{(2\pi\sigma^2)^{d/2}} \cdot\int_{\mathbb{R}^d}\exp\left(-\frac{x^Tx-2\alpha\mu^Tx+\alpha\|\mu\|^2}{2\sigma^2}\right)d^dx\} \\
&\leq \frac{1}{\alpha-1}\log\{\frac{1}{(2\pi\sigma^2)^{d/2}} \cdot\int_{\mathbb{R}^d}\exp\left(-\frac{||x-\alpha\mu||^2-\alpha(\alpha-1)||\mu||^2}{2\sigma^2}\right)d^dx\} \\
& \leq \frac{1}{\alpha-1}\log\{\exp({\frac{\alpha(\alpha-1)||\mu||^2}{2\alpha ^2}})\frac{1}{(2\pi\sigma^2)^{d/2}} \cdot \int_{\mathbb{R}^d}\exp\left(-\frac{||x-\alpha\mu||^2}{2\sigma^2}\right)d^dx\}\\
&\leq \frac{\alpha ||\mu||_2^2}{2\alpha ^2}.
\end{aligned}
\end{split}
\end{align}
So, Theorem \ref{definition:selective publish of Gaussian} is proved.
\end{proof}
\end{theorem}

\begin{theorem}\label{thm:utility-gau-truncated}
Let $\mathbf{x_c} =(0,0)$, $R=1$. For any input point $\mathbf{x} \in [-1, 1] \times [-1, 1]$, Algorithm~\ref{alg:rejection_sampling_single_gaussian} produces a perturbed output $\mathbf{x}' = A(\mathbf{x})$. Then the expected distance between $\mathbf{x}$ and $\mathbf{x}'$ is bounded as follows:
\begin{equation}
\label{eq:perturbation_bound_en}
\mathbb{E}_{A}[|x' - x|] \le \sigma\sqrt{\frac{2}{\pi}} \cdot \frac{1 - e^{-\frac{2}{\sigma^2}}}{\Phi\left(\frac{2}{\sigma}\right) - \frac{1}{2}}.
\end{equation}

\begin{proof}
In fact, no analytical solution exists for the rejection sampling mechanism on the unit disk. Therefore, we replace the perturbation interval with a square region $S=[-1,1]^2$. In fact, we have:   
\begin{align}
\begin{split}
\begin{aligned}
E_A[|x-x'|_1] & \leq \frac{\iint_S\|y-x\|_1\exp\left(-\frac{\|y-x\|_2^2}{2\sigma^2}\right)dy}{\iint_S\exp\left(-\frac{\|y-x\|_2^2}{2\sigma^2}\right)dy} = \frac{N_{2D}(x)}{C_{2D}(x)}. 
\end{aligned}
\end{split}
\end{align}

\begin{align}
\begin{split}
\begin{aligned}
C_{2D}(x)&=\int_{-1}^1\int_{-1}^1\exp(-\frac{(y_1-x_1)^2}{2\sigma^2})\exp(-\frac{(y_2-x_2)^2}{2\sigma^2})dy_1dy_2\\
&= \left(\int_{-1}^1\exp(-\frac{(t-x_1)^2}{2\sigma^2})dt\right)\cdot\left(\int_{-1}^1\exp(-\frac{(t-x_2)^2}{2\sigma^2})dt\right)\\
&=C_{1D}(x_1)\cdot C_{1D}(x_2).
\end{aligned}
\end{split}
\end{align}

\begin{align}
\begin{split}
\begin{aligned}
N_{2D}(x)&=\iint_S(|y_1-x_1|+|y_2-x_2|)\exp(-\frac{\|y-x\|_2^2}{2\sigma^2})dy \\
&=\iint_S|y_1-x_1|\exp(-\ldots)dy+\iint_S|y_2-x_2|\exp(-\ldots)dy\\
&=\left(\int_{-1}^1|t-x_1|e^{-\frac{(t-x_1)^2}{2\sigma^2}}dt\right)\cdot\left(\int_{-1}^1e^{-\frac{(t-x_2)^2}{2\sigma^2}}dt\right) \\
& + \left(\int_{-1}^1|t-x_2|e^{-\frac{(t-x_2)^2}{2\sigma^2}}dt\right)\cdot\left(\int_{-1}^1e^{-\frac{(t-x_1)^2}{2\sigma^2}}dt\right)\\
&=N_{1D}(x_1)C_{1D}(x_2)+N_{1D}(x_2)C_{1D}(x_1).
\end{aligned}
\end{split}
\end{align}

\begin{align}
\begin{split}
\begin{aligned}
E_A[|x-x'|_1] &\leq \frac{N_{1D}(x_1)C_{1D}(x_2)+C_{1D}(x_1)N_{1D}(x_2)}{C_{1D}(x_1)C_{1D}(x_2)} \\
&=\frac{N_{1D}(x_1)}{C_{1D}(x_1)}+\frac{N_{1D}(x_2)}{C_{1D}(x_2)}.
\end{aligned}
\end{split}
\end{align}

Let $E_{1D}(x) = \frac{N_{1D}(x)}{C_{1D}(x)}$, in fact we have:
\begin{align}
\begin{split}
\begin{aligned}
E_{1D}(x)\leq E_{1D}(1) &= \frac{\int_{-1}^1|y-1|\exp\left(-\frac{(y-1)^2}{2\sigma^2}\right)dy}{\int_{-1}^1\exp\left(-\frac{(t-1)^2}{2\sigma^2}\right)dt}\\
& = \frac{\int_{-2}^{0}|z|\exp\left(-\frac{z^{2}}{2\sigma^{2}}\right)dz}{\int_{-2}^{0}\exp\left(-\frac{z^{2}}{2\sigma^{2}}\right)dz}\\
& =\frac{\sigma^2\left(1-e^{-2/\sigma^2}\right)}{\sigma\int_{-2/\sigma}^0e^{-t^2/2}dt}\\
&=\frac{\sigma\left(1-e^{-2/\sigma^2}\right)}{\sqrt{2\pi}\left(\Phi(2/\sigma)-1/2\right)}.
\end{aligned}
\end{split}
\end{align}

Therefore, Theorem \ref{thm:utility-gau-truncated} is proved as follows:
\begin{align}
\begin{split}
\begin{aligned}
E_A[|x-x'|_1] \leq 2E_{1D}(x) \leq \sigma\sqrt{\frac{2}{\pi}} \cdot \frac{1 - e^{-\frac{2}{\sigma^2}}}{\Phi\left(\frac{2}{\sigma}\right) - \frac{1}{2}}.
\end{aligned}
\end{split}
\end{align}

\end{proof}

\end{theorem}

\subsection{Lemma and Theorem for Laplace Rejection Sampling Mechanism} \label{subsec:Lemma theorem supplementation}

\begin{lemma} \label{theorem_A}
For any interval $S$ and $\alpha >1$, we have:
\begin{align}
\begin{split}
\begin{aligned}
I_\alpha(S)\cdot Z_Q(S)^{\alpha-1}\geq Z_P(S)^\alpha,
\end{aligned}
\end{split}
\end{align}
where $I_\alpha(S)=\int_Sp(x)^\alpha q(x)^{1-\alpha}dx$, $Z_P(S)=\int_Sp(x)dx$, $Z_Q(S)=\int_Sq(x)dx$.

\begin{proof}
we define $g(x)=\frac{q(x)}{Z_Q(S)}$, $h(x)=\frac{p(x)}{q(x)}$ and a convex function $f(t) = t^\alpha$. 

\begin{definition}[\textbf{Jensen's Inequality}] \label{definition_A}
If $f(\cdot)$ is a convex function, then for any $g(x)$, $h(x)$, we have:
\begin{align}
\begin{split}
\begin{aligned}
f(\int h(x)g(x)dx)\leq \int f(h(x))g(x)dx.
\end{aligned}
\end{split}
\end{align}
\end{definition}

According to Definition \ref{definition_A} above, we have:
\begin{align}
\begin{split}
\begin{aligned}
&f(\int_Sh(x)g(x)dx)\leq\int_Sf(h(x))g(x)dx \\
&\iff f\left(\int_S\frac{p(x)}{q(x)}\frac{q(x)}{Z_Q(S)}dx\right) \leq \int_S\left(\frac{p(x)}{q(x)}\right)^\alpha\frac{q(x)}{Z_Q(S)}dx \\
& \iff f\left(\frac{1}{Z_Q(S)}\int_Sp(x)dx\right) \leq\frac{1}{Z_Q(S)}\int_Sp(x)^\alpha q(x)^{1-\alpha}dx \\
&\iff \left(\frac{Z_P(S)}{Z_Q(S)}\right)^\alpha \leq\frac{I_\alpha(S)}{Z_Q(S)} \\
& \iff I_\alpha(S)\cdot Z_Q(S)^{\alpha-1}\geq Z_P(S)^\alpha.
\end{aligned}
\end{split}
\end{align}
So Lemma \ref{theorem_A} is proved.
\end{proof}

\end{lemma}

\begin{lemma} \label{theorem_B}
for any $x,y>0$ and $\alpha >1$, we have:
\begin{align}
\begin{split}
\begin{aligned}
(1+x)^\alpha(1+y)^{1-\alpha}-x^\alpha y^{1-\alpha}\leq 1.
\end{aligned}
\end{split}
\end{align}

\begin{proof}
we define $g(x)=(1+x)^\alpha(1+y)^{1-\alpha}-x^\alpha y^{1-\alpha}$, we have:
\begin{align}
\begin{split}
\begin{aligned}
&g^{\prime}(x)=\alpha(1+x)^{\alpha-1}(1+y)^{1-\alpha}-\alpha x^{\alpha-1}y^{1-\alpha}\\
&\textbf{let}\quad g^{\prime}(x) = 0 \\
& \iff \alpha(1+x)^{\alpha-1}(1+y)^{1-\alpha}=\alpha x^{\alpha-1}y^{1-\alpha} \\
& \iff \left(\frac{1+x}{x}\right)^{\alpha-1}=\left(\frac{1+y}{y}\right)^{\alpha-1} \\
& \iff x=y.
\end{aligned}
\end{split}
\end{align}
When $x>y$, $g'(x)<0$; when $x<y$, $g'(x)>0$
So we have:
\begin{align}
\begin{split}
\begin{aligned}
g(x) \leq g(y) &= (1+y)^\alpha(1+y)^{1-\alpha}-y^\alpha y^{1-\alpha}\\
&=(1+y)^1-y^1\\
&=1.
\end{aligned}
\end{split}
\end{align}
so, Lemma \ref{theorem_B} is proved.
\end{proof}
\end{lemma}

\begin{lemma} \label{theorem:rdp_new_feature}
let $D_\alpha(S)=\frac{1}{\alpha-1}\log\left(\frac{I_\alpha(S)}{Z_P(S)^\alpha Z_Q(S)^{1-\alpha}}\right)$, if $S_1 \subseteq S_2$, then $D_\alpha(S_1)\leq D_\alpha(S_2)$. where $I_\alpha(S)=\int_Sp(x)^\alpha q(x)^{1-\alpha}dx$, $Z_P(S)=\int_Sp(x)dx$, $Z_Q(S)=\int_Sq(x)dx$.

\begin{proof}

let $S_2 = S_1 \cup S_{rem}$, we can prove this as follows:
\begin{align}
\begin{split}
\begin{aligned}    
&D_\alpha(S_1)\leq D_\alpha(S_2) \\
&\iff \frac{I_{\alpha}(S_1)}{Z_P(S_1)^\alpha Z_Q(S_1)^ {1-\alpha}} \leq \frac{I_{\alpha}(S_2)}{Z_P(S_2)^\alpha Z_Q(S_2)^ {1-\alpha}} \\
& \iff (\frac{Z_P(S_2)}{Z_P(S_1)})^\alpha(\frac{Z_Q(S_2)}{Z_Q(S_1)})^{1-\alpha} \leq \frac{I_{\alpha}(S_2)}{I_{\alpha}(S_1)} \\
& \iff (1+\frac{Z_P(S_{rem})}{Z_P(S_1)})^\alpha(1+\frac{Z_Q(S_{rem})}{Z_Q(S_1)})^{1-\alpha} \leq 1 + \frac{I_{\alpha}(S_{rem})}{I_{\alpha}(S_1)}.
\end{aligned}
\end{split}
\end{align}  

According to Lemma \ref{theorem_A} we have:
\begin{align}
\begin{split}
\begin{aligned}
&I_\alpha(S_{rem})\cdot Z_Q(S_{rem})^{\alpha-1}\geq Z_P(S_{rem})^\alpha \\
&I_\alpha(S_1)\cdot Z_Q(S_1)^{\alpha-1}\geq Z_P(S_1)^\alpha \\
& \iff \frac{I_{\alpha}(S_{rem})}{I_{\alpha}(S_1)} \geq (\frac{Z_P(S_{rem})}{Z_P(S_1)})^\alpha \cdot (\frac{Z_Q(S_{rem})}{Z_Q(S_1)})^{1-\alpha}.
\end{aligned}
\end{split}
\end{align}

So, let $x =\frac{Z_P(S_{rem})}{Z_P(S_1)}$, $y =\frac{Z_Q(S_{rem})}{Z_Q(S_1)}$, we need to prove
\begin{align}
\begin{split}
\begin{aligned}
1+ x^\alpha y^{1-\alpha} \geq (1+x)^\alpha(1+y)^{1-\alpha}.
\end{aligned}
\end{split}
\end{align}
According to Lemma \ref{theorem_B}, Lemma \ref{theorem:rdp_new_feature} is proved.
\end{proof}

\end{lemma}

\begin{theorem} \label{definition:rejection_sampling_single}
Let $p(\mathbf{t}|\mathbf{x},\lambda)$ be a probability density function with support on a bounded domain $D \subset \mathbb{R}^2$, defined as:
\begin{equation}
\label{eq:p(x)}
p(\mathbf{t}|\mathbf{x},\lambda) = 
\begin{cases} 
\frac{1}{N_C} Lap(\mathbf{t}|\mathbf{x},\lambda) & \text{if } \mathbf{t} \in D \\
0 & \text{otherwise},
\end{cases}
\end{equation}
where $N_C = \iint_D Lap(\mathbf{t}|\mathbf{x},\lambda) \,d\mathbf{t}$.

If a sample $\mathbf{x}'$ is generated by Algorithm \ref{alg:rejection_sampling_single}, then the probability distribution of $\mathbf{x}'$ follows target distribution $p(\mathbf{t}|\mathbf{x},\lambda)$.

\begin{proof}
To prove Theorems \ref{definition:rejection_sampling_gaussian} and Theorems \ref{definition:rejection_sampling_single}, we need to establish a more general result from which they follow. let noisy distribution be $p(x)$ as follows:
\begin{equation}
\label{eq:p_gen(x)}
p(\mathbf{x}) = 
\begin{cases} 
\frac{1}{N_C} f^*(\mathbf{x}) & \text{if } \mathbf{x} \in D \\
0 & \text{otherwise} ,
\end{cases}
\end{equation}
where $N_C = \int_D f^*(\mathbf{z}) \,d\mathbf{z}$ is the normalization constant.

we assume $D$ is a Euclidean
sphere with a radius $R$ and $Y$ is the output of RSM. We will prove that $Pr[Y=x|accept] = p(x)$, in fact, we hava:
\begin{align}
\begin{split}
\begin{aligned}
Pr[Y=x|accept] &= \frac{Pr[accept|Y=x]\cdot Pr[Y=x]}{Pr[accpet]} \\
&=\frac{Pr[accept|Y=x]\cdot \frac{1}{\pi R^2}}{Pr[accpet]}.
\end{aligned}
\end{split}
\end{align}
For the first term in the numerator, we can calculate it in the following way:
\begin{align}
\begin{split}
\begin{aligned}
Pr[accept|Y=x]  = Pr[u<\frac{f^*(x)}{M}] = \frac{f^*(x)}{M},
\end{aligned}
\end{split}
\end{align}
where $M$ is the max probability value in $D$, and $u \sim Uniform(0,1)$.

For the denominator, we can calculate it in the following manner:
\begin{align}
\begin{split}
\begin{aligned}
Pr[accept] &= \int_{x\in D} {Pr[accept|Y=x] \cdot Pr[Y=x]}dx \\
&=\int_{x \in D} \frac{f^*(x)}{M}\cdot \frac{1}{\pi R^2} dx \\
&=\frac{\int_{x\in D}{f^*(x)}dx}{M\pi R^2} \\
&=\frac{N_c}{M\pi R^2}.
\end{aligned}
\end{split}
\end{align}
So, we have:
\begin{align}
\begin{split}
\begin{aligned}
Pr[Y=x|accept]  &= \frac{Pr[accept|Y=x]\cdot \frac{1}{\pi R^2}}{Pr[accpet]} \\
&=\frac{f^*(x)}{M} \cdot \frac{1}{\pi R^2} \cdot \frac{M\pi R^2}{N_C}\\
&=\frac{f^*(x)}{N_c} \\
&=p(x).
\end{aligned}
\end{split}
\end{align}
Therefore, Theorem \ref{definition:rejection_sampling_gaussian} and Theorem \ref{definition:rejection_sampling_single} are proved.
\end{proof}

\end{theorem}

\begin{theorem}\label{thm:utility-lap-truncated-2d}
Let $\mathbf{x_c} =(0,0)$, $R=1$. For any input point $\mathbf{x} \in [-1, 1] \times [-1, 1]$, Algorithm \ref{alg:rejection_sampling_single} produces a perturbed output $\mathbf{x}' = A(\mathbf{x})$. Then the expected distance between $\mathbf{x}$ and $\mathbf{x}'$ is bounded as follows:
\begin{equation}
\label{eq:perturbation_bound_en_2d}
\mathbb{E}_{A}[\|\mathbf{x}' - \mathbf{x}\|_1] \le 2(\lambda - \frac{2}{e^{2/\lambda} - 1}).
\end{equation}

\begin{proof}
 In fact, the rejection sampling mechanism on the unit circle has no analytical solution. For this reason, we choose to calculate the expectation of the analytical expression on a square region $S=[-1,1]^2$, because the region S is slightly larger than the unit circle, from which an upper bound can be obtained:
\begin{align}
\begin{split}
\begin{aligned}
E_A[|x-x'|_1] \leq \frac{\iint_S\|y-x\|_1\exp(-\frac{\|y-x\|_1}{\lambda})dy}{\iint_S\exp(-\frac{\|y-x\|_1}{\lambda})dy}.
\end{aligned}
\end{split}
\end{align}

Let $C_{2D}(x) = \iint_S\|y-x\|_1\exp(-\frac{\|y-x\|_1}{\lambda})dy$, we have:
\begin{align}
\begin{split}
\begin{aligned}
C_{2D}(x)&=\int_{-1}^1\int_{-1}^1\exp\left(-\frac{|y_1-x_1|}{\lambda}\right)\exp\left(-\frac{|y_2-x_2|}{\lambda}\right)dy_1dy_2\\
&=\left(\int_{-1}^1\exp\left(-\frac{|t-x_1|}{\lambda}\right)dt\right)\cdot\left(\int_{-1}^1\exp\left(-\frac{|t-x_2|}{\lambda}\right)dt\right)\\
&=C_{1D}(x_1)\cdot C_{1D}(x_2).
\end{aligned}
\end{split}
\end{align}
\quad Let $N_{2D}(x) = \iint_S\exp(-\frac{\|y-x\|_1}{\lambda})dy$, we have:
\begin{align}
\begin{split}
\begin{aligned}
N_{2D}(x)&=\iint_S(\sum_i{|y_i-x_i|})\Pi_i{\exp\left(-\frac{|y_i-x_i|}{\lambda}\right)}dy_1dy_2\\
&=\left(\int_{-1}^1|y_1-x_1|e^{-\frac{|y_1-x_1|}{\lambda}}dy_1\right)\cdot\left(\int_{-1}^1e^{-\frac{|y_2-x_2|}{\lambda}}dy_2\right)\\ 
&+\left(\int_{-1}^1e^{-\frac{|y_2-x_1|}{\lambda}}dy_2\right) \cdot \left(\int_{-1}^1|y_1-x_2|e^{-\frac{|y_1-x_2|}{\lambda}}dy_1\right)\\
&=N_{1D}(x_1)C_{1D}(x_2)+C_{1D}(x_1)N_{1D}(x_2).
\end{aligned}
\end{split}
\end{align}
Hence, the original formula is represented as follows:
\begin{align}
\begin{split}
\begin{aligned}
E_A[|x-x'|_1] &\leq \frac{N_{1D}(x_1)C_{1D}(x_2)+C_{1D}(x_1)N_{1D}(x_2)}{C_{1D}(x_1)C_{1D}(x_2)}\\
&=\frac{N_{1D}(x_1)}{C_{1D}(x_1)}+\frac{N_{1D}(x_2)}{C_{1D}(x_2)}.
\end{aligned}
\end{split}
\end{align}
Let $E_{1D}(x) = \frac{N_{1D}(x)}{C_{1D}(x)}$, in fact we have:
\begin{align}
\begin{split}
\begin{aligned}
E_{1D}(x)\leq E_{1D}(1) &=\frac{\int_{-1}^1|y-1|\exp\left(-\frac{|y-1|}{\lambda}\right)dy}{\int_{-1}^1\exp\left(-\frac{|t-1|}{\lambda}\right)dt}\\
&=\frac{\int_2^0ze^{-z/\lambda}(-dz)}{\int_2^0e^{-z/\lambda}(-dz)}\\
&=\frac{\lambda^2-\lambda(\lambda+2)e^{-2/\lambda}}{\lambda(1-e^{-2/\lambda})}\\
&=\lambda-\frac{2}{e^{2/\lambda}-1}.
\end{aligned}
\end{split}
\end{align}
Therefore, Theorem \ref{thm:utility-lap-truncated-2d} is proved as follows:
\begin{align}
\begin{split}
\begin{aligned}
E_A[|x-x'|_1] \leq 2E_{1D}(x) \leq2(\lambda - \frac{2}{e^{2/\lambda} - 1}).
\end{aligned}
\end{split}
\end{align}   
\end{proof}
\end{theorem}

\section{Additional Experiment Results} \label{additional experiment}

\subsubsection{\textbf{Experiments of RSM-Gau.}}
Figures \ref{fig:knn_cluster_gaussian} and \ref{fig:RSM-Gau-k_variation} demonstrate the performance of integrating RSM-Gau into DPRS. Specifically, the former shows that the recall of DPRS-Gau increases as the number of centers grows from 600 to 1000 and larger privacy budgets ($\epsilon$) also yield higher recall, while the latter indicates that both DPRS-Gau and DPRS-Lap outperform the baselines (SRR, Square, and Laplace) across all datasets, confirming the effectiveness of the DPRS framework. 

\begin{figure*}[!t]
    \includegraphics[width=0.6\textwidth]{figure/epsilon_legend.pdf}
  \centering

  \begin{subfigure}{0.26\linewidth}
    \centering
    \includegraphics[width=\linewidth]{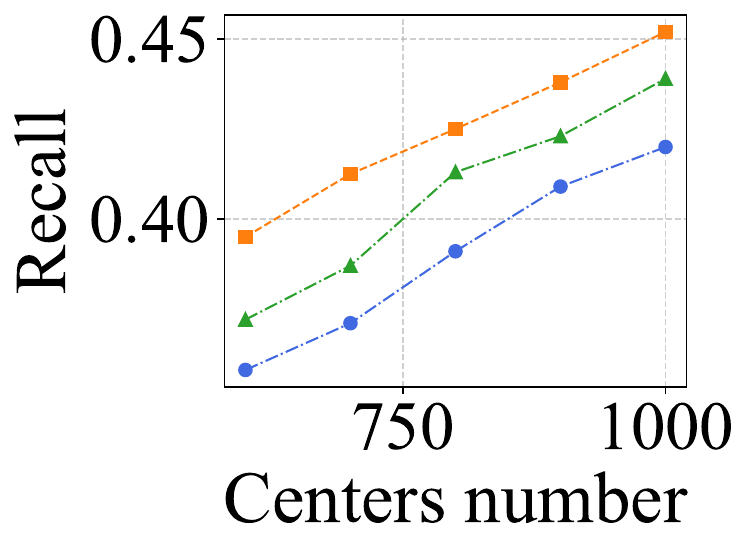}
    \captionsetup{skip=-2pt}
    \caption{Brightkite}
    \label{fig:span_gau_sub1}
  \end{subfigure}
  \begin{subfigure}{0.235\linewidth}
    \centering
    \includegraphics[width=\linewidth]{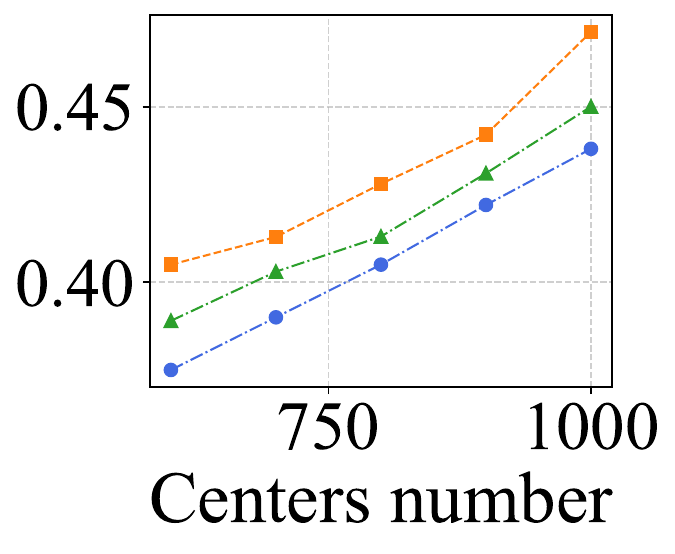}
    \captionsetup{skip=-2pt}
    \caption{Gowalla}
    \label{fig:span_gau_sub2}
  \end{subfigure}
  \begin{subfigure}{0.235\linewidth}
    \centering
    \includegraphics[width=\linewidth]{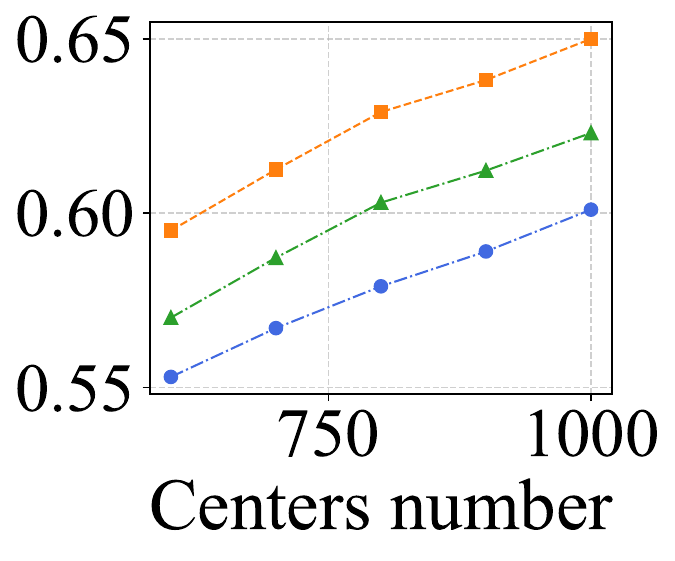}
    \captionsetup{skip=-2pt}
    \caption{Gaussian}
    \label{fig:span_gau_sub3}
  \end{subfigure}
  \begin{subfigure}{0.235\linewidth}
    \centering
    \includegraphics[width=\linewidth]{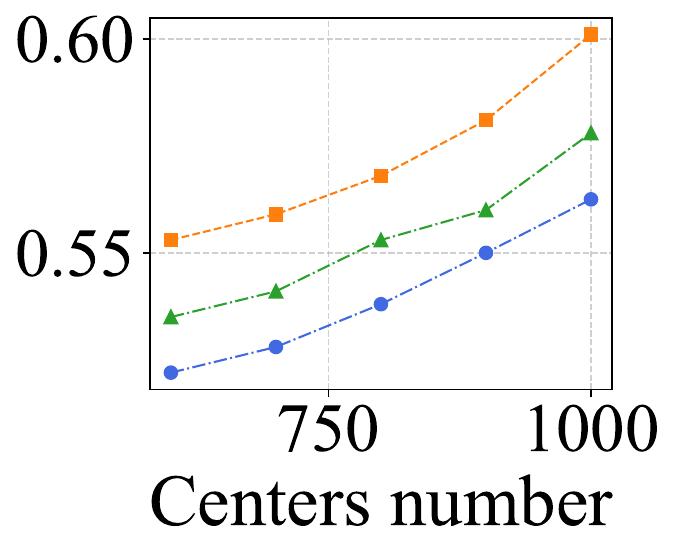}
    \captionsetup{skip=-2pt}
    \caption{Beta}
    \label{fig:span_gau_sub4}
  \end{subfigure}

  \caption{Performance impact on the number of interval centers for DPRS-Gau in 10-NN task at $k=10$.}
  \label{fig:knn_cluster_gaussian}
\end{figure*}

\begin{figure*}[!t]
    \includegraphics[width=0.85\textwidth]{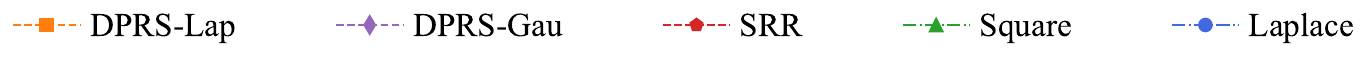}
  \centering

  \begin{subfigure}{0.26\linewidth}
    \centering
    \includegraphics[width=\linewidth]{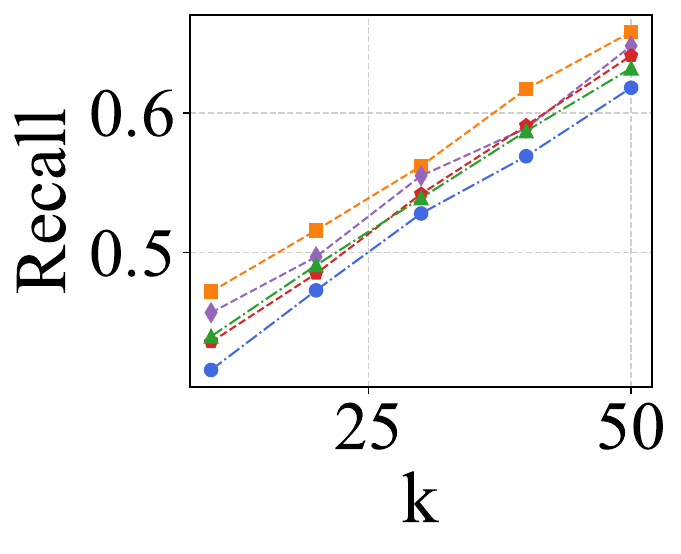}
    \captionsetup{skip=-2pt}
    \caption{Brightkite}
    \label{fig:k_1}
  \end{subfigure}
  \begin{subfigure}{0.235\linewidth}
    \centering
    \includegraphics[width=\linewidth]{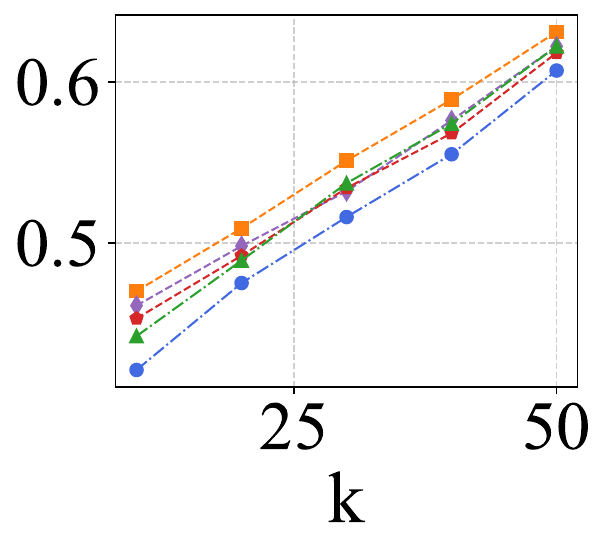}
    \captionsetup{skip=-2pt}
    \caption{Gowalla}
    \label{fig:span_sub2_gau_k}
  \end{subfigure}
  \begin{subfigure}{0.235\linewidth}
    \centering
    \includegraphics[width=\linewidth]{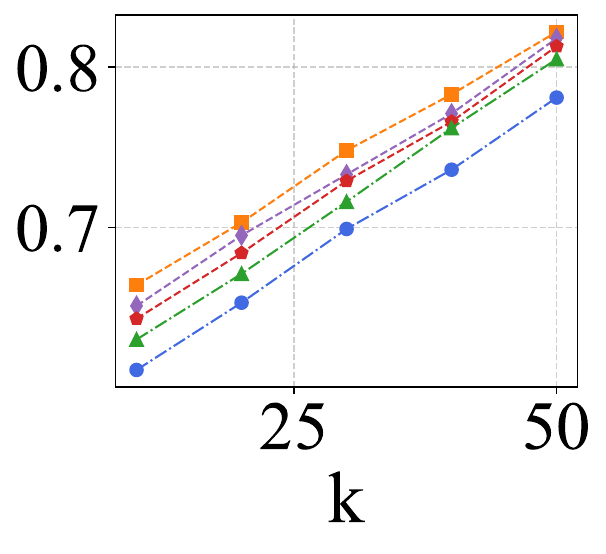}
    \captionsetup{skip=-2pt}
    \caption{Gaussian}
    \label{fig:span_sub3_gau_k}
  \end{subfigure}
  \begin{subfigure}{0.235\linewidth}
    \centering
    \includegraphics[width=\linewidth]{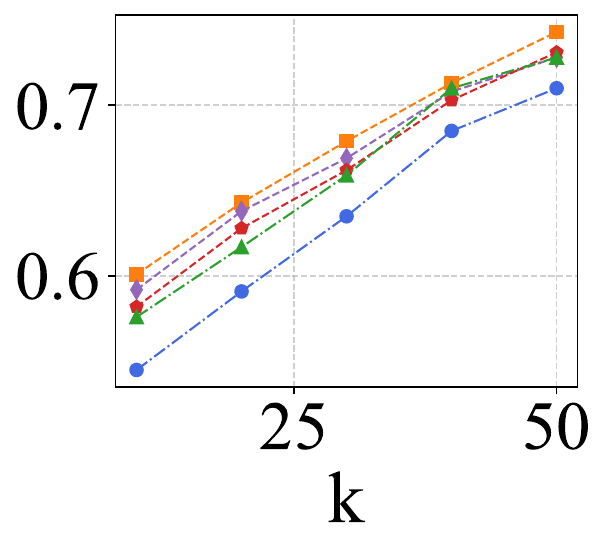}
    \captionsetup{skip=-2pt}
    \caption{Beta}
    \label{fig:span_sub4_gau_k}
  \end{subfigure}

  \caption{Performance impact on the number of neighbors for all mechanisms in the 10-NN task at $\epsilon=3$.}
  \label{fig:RSM-Gau-k_variation}
\end{figure*}

\begin{figure*}[!t] 
  \centering

  \begin{minipage}[t]{0.49\linewidth}
    \centering
    \includegraphics[width=0.85\linewidth]{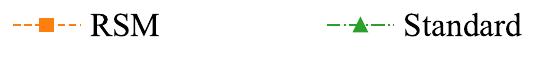}\vspace{1mm}

    \begin{subfigure}[t]{0.48\linewidth}
      \centering
      \includegraphics[width=\linewidth]{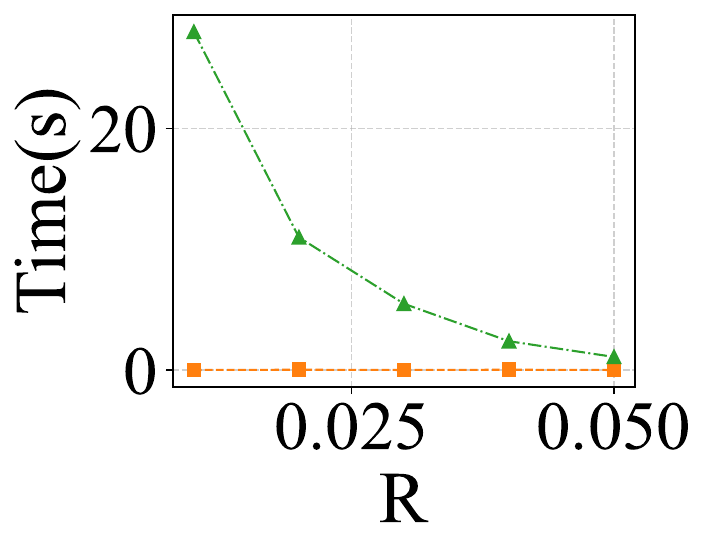}
      \caption{$\epsilon = 1$}
      \label{figure: time_eps=1}
    \end{subfigure}
    \hfill
    \begin{subfigure}[t]{0.48\linewidth}
      \centering
      \includegraphics[width=\linewidth]{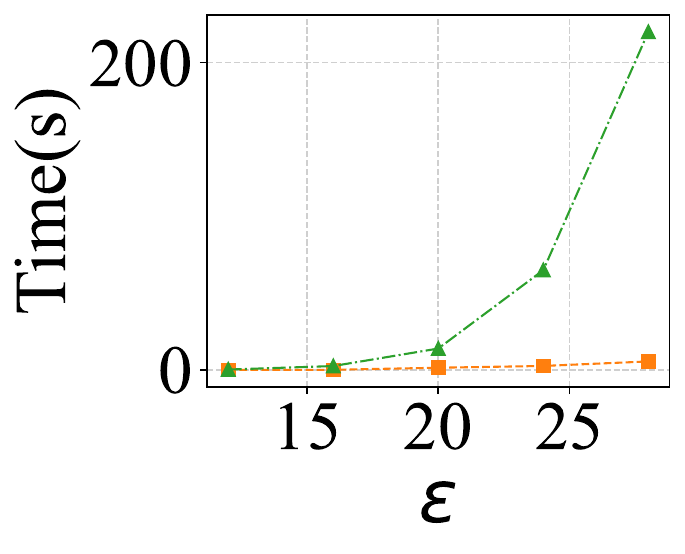}
      \caption{$R=0.5$}
      \label{figure: time_r=0.5}
    \end{subfigure}

    \caption{Time overhead for sampling algorithms.}
    \label{figure: time overhead}
  \end{minipage}
  \hfill
  \begin{minipage}[t]{0.49\linewidth}
    \centering
    \vspace{0pt}

    \begin{subfigure}[t]{0.48\linewidth}
      \centering
      \includegraphics[width=\linewidth]{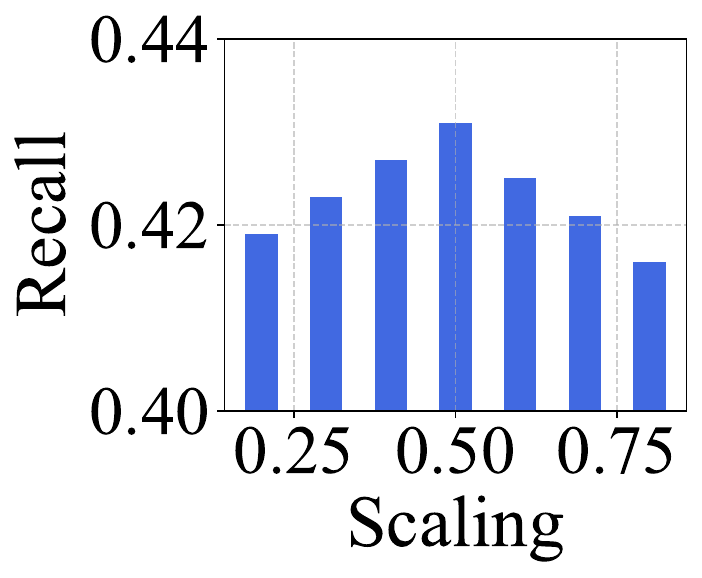}
      \caption{Recall}
      \label{figure: ratio_recall}
    \end{subfigure}
    \hfill
    \begin{subfigure}[t]{0.48\linewidth}
      \centering
      \includegraphics[width=\linewidth]{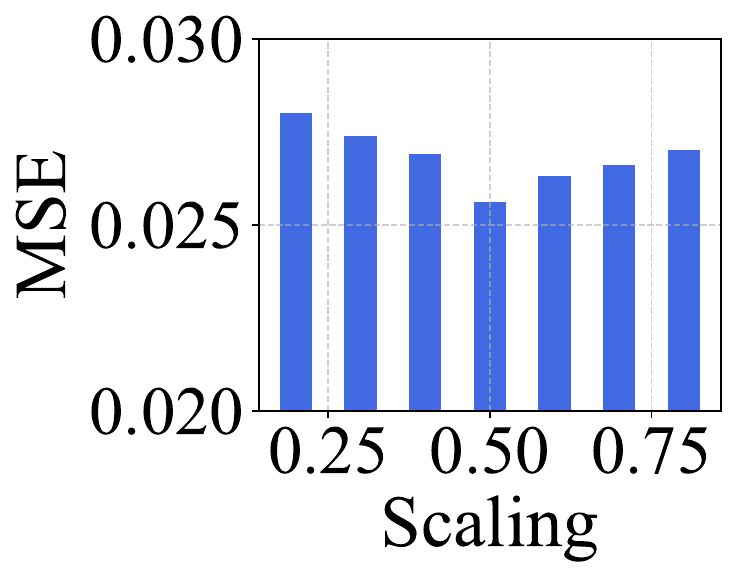}
      \caption{MSE}
      \label{figure: ratio_mse}
    \end{subfigure}

    \caption{Performance impact on radius scaling selection.}
    \label{figure: threshold selection}
  \end{minipage}

\end{figure*}

\subsubsection{\textbf{Time overhead analysis of DPRS.}}
We set the interval center to $(0,0)$ and uniformly sample 100 samples from the square $[-1, 1] \times [-1, 1]$, then compare the time cost of perturbing these samples into the interval. As shown in Figure \ref{figure: time overhead}, compared to the standard noise-and-resampling mechanism, our RSM consistently maintains a minimal time overhead (less than 40ms) under varying radius ($R$) and privacy budget ($\epsilon$) settings.

\subsubsection{\textbf{Optimal $\tau$ Analysis.}}
We evaluate the perturbation radius scaling at $\epsilon=1$. Figure \ref{figure: threshold selection} illustrates the impact of varying the scaling coefficient on $k$-NN recall and distance MSE. Specifically, setting the radius to 0.5 times the distance to the nearest PIC center yields the optimal performance, achieving the highest recall and the lowest MSE.

\end{document}